\documentclass[10pt]{article}
\textwidth450pt
\hoffset-50pt
\usepackage{amssymb}
\usepackage{amsmath}
\usepackage[all]{xy}
\usepackage{amsmath}
\usepackage{amsfonts}
\usepackage{amssymb}
\usepackage{amscd}
\usepackage{amsthm}
\usepackage{latexsym}
\usepackage{amsbsy}
\usepackage{graphicx}
\usepackage{float}
\usepackage[toc,page]{appendix}
\usepackage{caption}
\usepackage{subcaption}

\usepackage{amssymb,latexsym,amsmath,amscd,slashed,mathtools,mathrsfs,bm,stmaryrd,dsfont,amsthm}
\usepackage{setspace,scalerel,color,graphicx,enumerate}

%%%%%%%%%%%%%%%%%%%%%%%%%%%%%%%%%%%
%NewCommands%

\newcommand{\ci}{\mathrm{i}}

\newtheorem*{remark}{Remark}
\newtheorem{theorem}{Theorem}[section]
\newtheorem{definition}{Definition}[section]
\newtheorem{corollary}{Corollary}[theorem]
\newtheorem{lemma}[theorem]{Lemma}
\newtheorem{prop}{Proposition}
\DeclareMathOperator{\tr}{Tr}

 %<- modify value to suit your needs

\makeatletter
\def\section{\@startsection{section}{1}{\z@}{-3.25ex plus -1ex minus
		-.2ex}{1.5ex plus .2ex}{\normalfont\bfseries}}
\def\subsection{\@startsection{subsection}{1}{\z@}{-3.25ex plus -1ex
		minus -.2ex}{1.5ex plus .2ex}{\normalfont\itshape}}
\date{Resubmission date: April 4, 2022}
\makeatother
%\title{Universality of Dirac Operators on Random Finite Noncomutative Geometries}
\title{ Spectral statistics of Dirac ensembles}
\author{ Masoud Khalkhali and  Nathan Pagliaroli\\
	Department of Mathematics, University of Western Ontario\\
	London, Ontario, Canada\footnote{\emph{Email addresses}:  masoud@uwo.ca, npagliar@uwo.ca}}

\begin{document}
	\maketitle
	\begin{abstract}
	In this paper we find spectral properties in the large $N$ limit of Dirac operators that come from random finite noncommutative geometries. In particular for a Gaussian potential the limiting eigenvalue spectrum is shown to be universal regardless of the geometry and is given by the convolution of the semicircle law with itself. For simple non-Gaussian models this convolution property is also evident. In order to prove these results we show that a wide class of multi-trace multimatrix models have a genus expansion.  
	\end{abstract}
	%\tableofcontents

\section{Introduction}	

The notion of a {\it Dirac ensemble} provides an interesting link between noncommutative geometry and random matrix theory. The {\it partition function} of these ensembles is   of the form 
\begin{equation}\label{partition}
Z = \int e^{- \tr S (D)}  d D \, ,
\end{equation}
where the potential functional ${S (D)}$ is defined in terms of the spectrum of the Dirac operator $D$, and the integral is over the moduli space of Dirac operators compatible with a  fixed  {\it finite noncommutative geometry}, called the {\it Fermion space}.  In particular Dirac operators are dynamical variables and play the role of metric  fields in these models.  Moreover the moduli space of Dirac operators is typically a finite dimensional vector space. The link to random matrix theory is through the  associated multimatrix  and multi-trace random  matrix integral of the form
$$ Z =\int e^{- V(H_1, H_2, \dots, H_k)} dH_1 dH_2 \dots dH_k,$$ 
where the potential $V$   is derived from the potential functional    $S$ in equation (\ref{partition}). More generally one is interested in expectation values of the form
$$ \langle \, \mathcal{ O}   \rangle  = \int \mathcal{ O}(H_1, \dots, H_k) e^{- V(H_1, H_2, \dots, H_k)} dH_1 dH_2 \dots dH_k,$$
where $\mathcal{O}$, the observable,  is a  polynomial function in matrix variables  and  integration is over the space of $k$-tuples of Hermitian matrices  with its  Lebesgue measure. 
It should be stressed that the potential function $S$  in \eqref{partition} is usually chosen in such a way that the partition function  is absolutely convergent and finite. However, divergent integrals can be studied perturbatively as formal matrix integrals, which we will briefly discuss in Appendix B. A typical  choice for $S$ would be 
$$S(D)= \text{Tr} (f (D))$$ for a  real polynomial $f$ of even degree with positive leading coefficient. This is in contrast with the spectral action principle of Chamseddine and Connes, where the heat kernel expansion of $f(D)$ for a rapid decay even function $f$ plays a dominant role \cite{Chamseddine}. 

In this way techniques of random matrix theory  such as 't Hooft  genus expansion, resolvent methods, Schwinger-Dyson equations,   spectral curves, and topological recursion provide immediate and very natural links between noncommutative geometry, classical geometry, and analysis on  Riemann surfaces. This idea of using random matrix theory techniques to study Dirac ensembles like  (\ref{partition})  was first pursued in \cite{AK,First paper} and the present  paper should be regarded as a contribution to this idea. Another recent idea was to employ Bootstrapping to these models \cite{Bootstraps}. The use of random matrix theory techniques can also be found in Noncommutative Quantum Field Theory \cite{Grosse all quartic,Blobbed top of quartic}. 

An alternative method of studying Dirac ensembles  would be through use of Monte Carlo simulation. This is the approached pursued by Barret and Glaser in  \cite{Barrett2016} where these models were first introduced. This was further explored in \cite{glaser,Spectral estimators}. The motivation was to give toy models of Euclidean quantum gravity. We should also mention that in \cite{BV nottingham} BV formalism is applied to analyze these models.

In this paper we explicitly find the eigenvalue distribution for all Gaussian Dirac ensembles, i.e. ensembles of the form 
\begin{equation*}
Z = \int_{\mathcal{G}}e^{-\frac{1}{2k}\tr D^{2}}dD,
\end{equation*}
where $\mathcal{G}$ is the moduli space of Dirac operators and $k$ is some appropriate integer for normalization that depends on the spectral triple. In fact, $k$ is the dimension of the space of gamma matrices from the fermion space. We refer to this result as the Wigner Convolution law  and it goes as follows: for any Gaussian Dirac ensemble, the limiting spectral density function of the Dirac operator is given by
\begin{equation}\label{convolution}
\rho_{D}(x) = \int_{\mathbb{R}}\rho_{W}(x-t)\rho_{W}(t)dt,
\end{equation}
where
\begin{equation*}
\rho_{W}(x) = \frac{1}{2\pi}\sqrt{4-x^{2}}_{[-2,\,2]}, 
\end{equation*}
is the Wigner Semicircle Distribution.
This result is interesting because it is independent of the geometry of the Dirac ensemble. The integral (\ref{convolution}) is elliptic and does not have a closed form.  Observe  figure 1 for a comparison of the semicircle distribution with its self-convolution. This result is proved in Section four.
\begin{figure}[H]\label{Wigner}
	\includegraphics[width=8cm]{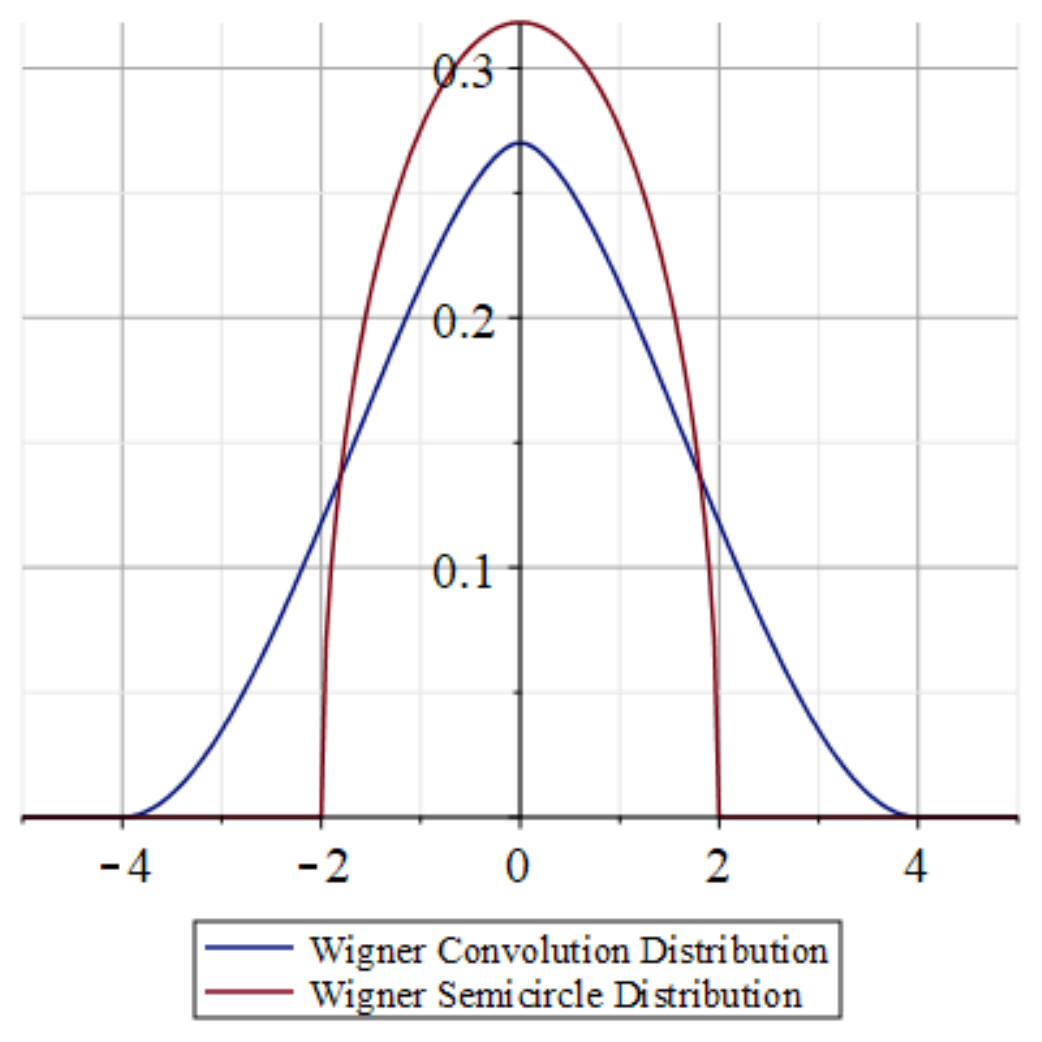}
	\centering
	\caption{The Wigner semicircle distribution compared to the Wigner  Convolution Distribution.}
\end{figure}

A second related result we prove in this paper is that for Dirac ensembles that are single matrix models a similar law holds.  Such Dirac ensembles are necessarily of type $(1,0)$ and $(0,1)$ \cite{Barrett2015}. Let $\rho$ be the limiting eigenvalue distribution for the corresponding random matrix model. Then the limiting eigenvalue distribution of the Dirac ensemble is
\begin{equation*}
\rho_{D}(x) = \int_{\mathbb{R}}\rho(x-t)\rho(x)dt.
\end{equation*}

It turns out that in the Gaussian case $\rho $ is the semicircle distribution, thus both results tell us that the Dirac operator's spectral density function is the self-convolution of the random matrix model's spectral density function. It would be interesting to see if similar results apply to an even wider class of models, but scarcity of techniques concerning multimatrix models  makes this task difficult.

This paper is organized as follows. In Section two we define precisely what we mean by a Dirac ensemble and introduce some basic examples. Each Dirac ensemble has a corresponding random matrix model. We then discuss the general relationship between the Dirac operators spectral density function and the corresponding random matrix spectral density function in the large $N$ limit. In Section three we review the relationship between stuffed maps and bitracial matrix models, as seen in \cite{blobbed1,blobbed,blobbed2}. We provide examples of stuffed maps glued from various 2-cells and consider the generating functions of such gluings. These generating functions are proven to be well-defined objects with a genus expansion. Furthermore they satisfy a 1-cut Lemma which provides a factorization of the resolvent type generating function. In Section four we prove the main convolution theorems mentioned above. In Appendix A, we explain why certain terms in multitrace matrix models do not contribute to the large $N$ limiting eigenvalue distribution. In Appendix B, we review the definition of formal matrix models.

\section{Examples of Dirac ensembles}

It was discovered via the Connes' distance formula \cite{Connes94}
$$ d(p, q) = \text{Sup}\{|f(p)-f(q)|; \, ||[D, f]|| \leq 1\},$$
that the geometric distance on a spin manifold can be recovered from the Dirac operator $D$ on the space of spinors. The  {\it reconstruction theorem} of Connes  \cite{Connes-2013},   tells us that   a spin Riemannian manifold can be recovered from  a commutative real spectral triple under certain additional conditions. Thus we may think of real  spectral triples  as a noncommutative analogue of spin Riemannian manifolds, where the Dirac operator defines the metric.

A spectral triple has three main components $(\mathcal{A},\mathcal{H}, D)$, where $\mathcal{A}$ is an involutive complex algebra acting by bounded operators on a Hilbert space $\mathcal{H}$ and  $D$ is a  self-adjoint operator acting on $\mathcal{H}$ \cite{Connes NCG and Reality}.  A real spectral triple additionally comes with the charge conjugation operator  $J$ and  the chirality operator $\gamma$.  Finite dimensional real spectral triples have been classified in \cite{Barrett2015}. In particular, the number of gamma matrices that square to one and minus one, denoted $(p,q)$ respectively,  can be used to characterize finite real spectral triples. In this paper  we are strictly interested in finite dimensional real spectral triples which allows our integrals to be expressed as matrix integrals.

Let us denote by ${{\mathrm{C \ell}}_{p,q}}$  the real Clifford algebra of the real quadratic space  ${\mathbb{R}^{p,q}}$ equipped with the quadratic form
\begin{equation*} 
{\eta} (v,v) = {{v_1}^2} + \cdots + {{v_p}^2} - {{v_{p+1}}^2} - \cdots -  {{v_{p+q}}^2} \,
, \quad v \in \mathbb{R}^{p,q} \, .
\end{equation*}
Let ${ {\mathbb{C} \ell}_n = {\mathrm{C \ell}}_{p,q} {\otimes_{\mathbb{R}}} \mathbb{C}\,}$, with $p+q=n$, denote the  complexification of  ${{\mathrm{C \ell}}_{p,q}}$. Let ${ {\{ e_i \}}_{i=1}^n }$ be an oriented  basis of $\mathbb{R}^{p,q}$ with ${{\eta} (e_i , e_j) = \pm \delta_{ij} \,}$. The \emph{chirality operator} $\Gamma$ is defined by
\begin{equation*}
\Gamma = \ci^{\frac{1}{2} s (s+1)} \, e_1 e_2 \cdots e_n \, ,
\end{equation*}
with $s \equiv q-p \pmod{8}$.
In this paper, ${V_{p,q}}$ denotes the irreducible complex ${{\mathrm{C \ell}}_{p,q} \,}$-module, where, for $p+q =n$ odd, the chirality operator $\Gamma$ acts trivially on ${V_{p,q} \,}$. The operators $\gamma^{i} = \rho(e_{i})$ are called \textit{gamma matrices}. It is well known that there exist a Hermitian inner product ${\langle \cdot , \cdot \rangle}$ on ${V_{p,q}}$ such that the gamma matrices act as unitary operators.

Let ${C : V_{p,q} \to V_{p,q}}$ be a \emph{real structure} of ${KO}$-dimension 
${s \equiv q-p \pmod{8}}$ (see, e.g. \cite{Marcolli,Connes NCG and Reality}) on ${V_{p,q}}$ such that
\begin{equation*}
\left( { {\mathbb{C} \ell}_n \, , V_{p,q} \, , \Gamma , C} \right)
\end{equation*}
satisfies all the axioms of a fermion space, that is all the axioms of a real spectral triple except the existence of a Dirac operator. We borrow the following definition from \cite{AK}:

\begin{definition}
A \emph{matrix geometry of type ${(p,q)}$} is a finite real spectral triple
${\left( { \mathcal{A}, \mathcal{H} , D , \gamma , J} \right)}$, where the corresponding fermion space, that is given by:
\begin{itemize}
\item
${ \mathcal{A}= {{\mathrm{M}}_N (\mathbb{C})} }$
\item
${ \mathcal{H} = V_{p,q} \otimes {{\mathrm{M}}_N (\mathbb{C})} }$
\item
${ \langle {v \otimes A , u \otimes B} \rangle = \langle v , u \rangle \, \tr \left( A B^\ast \right)
\, , \quad v,u \in V_{p,q} \, , \  A,B \in {{\mathrm{M}}_N (\mathbb{C})} }$
\item
${ \pi (A) (v \otimes B) = v \otimes \left( AB \right)  }$
\item
${ \gamma (v \otimes A) = (\Gamma v) \otimes A }$
\item
${ J (v \otimes A) = (C v) \otimes A^\ast  \,}$,
\end{itemize}
where ${{{\mathrm{M}}_N (\mathbb{C})} \cong \mathrm{End} \left( {{\mathbb{C}}^N} \right) = 
{\mathbb{C}}^N \otimes \left( {{\mathbb{C}}^N} \right)^\ast \, }$. The Dirac operators of type $(p,q)$ matrix geometries are expressed in term of gamma matrices ${\gamma^i \,}$, and commutators or anti-commutators with given Hermitian matrices $H$ and skew-Hermitian matrices $L$ (see \cite{Barrett2015, Barrett2016}).\footnote{In \cite{Barrett2015}, this class of spectral triples is referred to as \emph{fuzzy spaces of type ${(p,q)}$}.}
\end{definition}

We now define a Dirac ensemble as a matrix geometry such that the Dirac operator is a random matrix distributed according to some matrix probability distribution 
\begin{equation*}
	e^{-\tr S(D)}dD
\end{equation*} 
while simultaneously satisfying the axioms of a real spectral triple. We now present to the reader some simple examples of such ensembles. 

\subsection{1-matrix Dirac ensembles}

Consider finite real spectral triples $(A, \mathcal{H}, D)$ where the algebra is $ A= M_N(\mathbb{C})$ and the Hilbert space is $ \mathcal{H} =\mathbb{C} \otimes M_N(\mathbb{C})$. The two noncommutative geometries with  $p+q =1$ from \cite{Barrett2016} are as follows:
\begin{enumerate}
	\item Type (1,0) with \begin{equation*}
	\gamma^{1} = 1,
	\end{equation*}
	\begin{equation*}
	D = \{H,\cdot\},
	\end{equation*}
	where $H$ is a Hermitian matrix.
	\item Type (0,1) with \begin{equation*}
	\gamma^{1} = -i,
	\end{equation*} 
	\begin{equation*}
	D = \gamma^{1}\otimes [L,\cdot],
	\end{equation*}
	where $L$ is a skew-Hermitian matrix.
\end{enumerate}
The commutator and anti-commutator can be written using the tensor product:
\begin{equation*}
	\{H,\cdot \} = H\otimes I_{N} + I_{N} \otimes H^{T},
\end{equation*} 
\begin{equation*}
	[L,\cdot ] = L\otimes I_{N} - I_{N} \otimes L^{T}.
\end{equation*} 
This allows us to compute trace powers of $D$ in both cases:
\begin{equation*}
	\sum_{k=0}^{\ell} {\ell\choose k}\tr H^{\ell-k}\tr H^{k},
\end{equation*}
\begin{equation*}
\sum_{k=0}^{\ell} {\ell\choose k}(-1)^{k}\tr L^{\ell-k}\tr L^{k}.
\end{equation*}

Consider for example the following quartic Dirac ensemble in both types

\begin{equation*}
	Z = \int_{\mathcal{G}} e^{-\frac{1}{4}\tr D^{2} - \frac{t_{4}}{8}\tr D^{4}} dD.
\end{equation*}
In type $(1,0)$ the integral is over the space of Hermitian $N\times N$ matrices and the potential is 
\begin{align*}
	\frac{1}{2}(N \tr H^{2} + 2(\tr H)^{2}) + \frac{1}{4}\left( N \tr H^{4} + 8 \tr H \tr H^{3} + 6 (\tr H^{2})^{2}\right).
\end{align*}
In type $(0,1)$ the integral is over the space of skew-Hermitian $N\times N$ matrices and the potential is 
\begin{align*}
\frac{1}{2}(-N \tr L^{2} + 2(\tr L)^{2}) + \frac{1}{4}\left( N \tr L^{4} -8 \tr L \tr L^{3} + 6 (\tr L^{2})^{2}\right).
\end{align*}
We may apply the transformation $L \rightarrow iH $, for some Hermitian matrix $H$, to get 
\begin{align*}
\frac{1}{2}(N \tr H^{2} - 2(\tr H)^{2}) + \frac{1}{4}\left( N \tr H^{4} - 8 \tr H \tr H^{3} + 6 (\tr H^{2})^{2}\right).
\end{align*}

As we will later see the two terms with minus signs contribute nothing in the large $N$ limit. This was first noticed for this type of convergent model in \cite{First paper}.  Using the above formulas for even trace powers it is not hard to see that type $(1,0)$ and $(0,1)$ will have identical limiting spectral density and moment generating functions. For further explanation we refer the reader to Appendix A. 

We wish to study that the limiting eigenavlue distribution of the Dirac operator using random matrix theory. The following theorem gives the relationship between the spectral density function of the Dirac operator to that of its random matrix model in the large $N$ limit when there is an even potential.

\begin{theorem}
	Consider a type (1,0) or (0,1) Dirac ensemble with a partition function
	\begin{equation*}
		Z = \int_{\mathcal{G}} e^{-\frac{1}{2k}\tr S(D)}dD
	\end{equation*}
	where
	\begin{equation*}
		S(D) = \frac{1}{2}D^{2} + \sum_{j=3}^{d}\frac{t_{2j}}{2j}D^{2j}.
	\end{equation*}
	If the limiting eigenvalue distributions of the associated random matrix ensemble exist (in the formal sense), call it $\rho(x)$, then
	 the limiting spectral density function of the Dirac operator is
	\begin{equation*}
	\rho_{D}(x) = \int_{\mathbb{R}}\rho(x-t)\rho(x)dt.
	\end{equation*}
\end{theorem}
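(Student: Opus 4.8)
The plan is to exploit the tensor-product structure of the Dirac operator so that its spectral statistics reduce, in a completely elementary way, to those of a single Hermitian matrix, after which the theorem becomes the assertion that classical convolution commutes with the large-$N$ limit. Concretely, recall from the examples above that in type $(1,0)$ one has $D=H\otimes I_N+I_N\otimes H^{T}$ with $H$ Hermitian, and in type $(0,1)$, writing $H:=-iL$ (which is Hermitian because $L$ is skew-Hermitian), one has $D=H\otimes I_N-I_N\otimes H^{T}$. In both cases the two summands commute, so $D$ is simultaneously diagonalizable with $H$: if $\lambda_1,\dots,\lambda_N$ denote the eigenvalues of $H$, then the $N^{2}$ eigenvalues of $D$ are the pairwise sums $\lambda_i+\lambda_j$ in type $(1,0)$ and the differences $\lambda_i-\lambda_j$ in type $(0,1)$. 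I would record this as the algebraic identity, valid for every $\ell\ge 0$,
\[
\frac{1}{N^{2}}\tr D^{\ell}=\sum_{m=0}^{\ell}\binom{\ell}{m}\,\varepsilon^{\,\ell-m}\Bigl(\frac{1}{N}\tr H^{m}\Bigr)\Bigl(\frac{1}{N}\tr H^{\ell-m}\Bigr),
\]
where $\varepsilon=+1$ in type $(1,0)$ and $\varepsilon=-1$ in type $(0,1)$; this is just a rewriting of the trace-power formulas recorded before the statement of the theorem, and it says precisely that the empirical spectral measure of $D$ is the classical self-convolution (up to reflection of one factor when $\varepsilon=-1$) of that of $H$.

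Next I would take expectations with respect to the matrix-model measure and pass to the limit. By hypothesis the limits $m_{j}:=\lim_{N\to\infty}\mathbb{E}[\tfrac{1}{N}\tr H^{j}]=\int_{\mathbb{R}}x^{j}\,d\rho(x)$ exist as formal power series in the couplings $t_{2j}$. The one substantive input is the factorization of products of normalized traces in the large-$N$ limit, $\lim_{N\to\infty}\mathbb{E}[\tfrac{1}{N}\tr H^{a}\cdot\tfrac{1}{N}\tr H^{b}]=m_{a}m_{b}$, equivalently the vanishing at order $N^{-2}$ of the connected two-trace correlator. Because the associated random matrix ensemble here is a \emph{multi-trace} model (its potential contains $(\tr H)^2$, $\tr H\,\tr H^3$, $(\tr H^2)^2$, and so on), this is not the classical one-matrix statement; it is exactly what the genus expansion for stuffed-map / bitracial matrix models, together with the factorization of the resolvent-type generating function, from Section~3 supplies. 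Granting it and letting $N\to\infty$ in the displayed identity gives
\[
\lim_{N\to\infty}\mathbb{E}\Bigl[\tfrac{1}{N^{2}}\tr D^{\ell}\Bigr]=\sum_{m=0}^{\ell}\binom{\ell}{m}\varepsilon^{\,\ell-m}m_{m}m_{\ell-m}=\iint_{\mathbb{R}^{2}}(x+\varepsilon y)^{\ell}\,d\rho(x)\,d\rho(y),
\]
and the left-hand side is by definition the $\ell$-th moment of the limiting spectral measure $\rho_D$ of the Dirac operator.

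It then remains to identify the right-hand side. For type $(1,0)$ ($\varepsilon=+1$) it is manifestly the $\ell$-th moment of the convolution $\rho*\rho$, and since a compactly supported measure — or, in the perturbative setting, a formal moment sequence — is determined by its moments, we conclude $\rho_D=\rho*\rho$, i.e.\ $\rho_{D}(x)=\int_{\mathbb{R}}\rho(x-t)\,\rho(t)\,dt$. For type $(0,1)$ ($\varepsilon=-1$) the same computation gives a priori $\rho*\tilde\rho$, where $\tilde\rho(A):=\rho(-A)$; here I would add the observation that, since $S$ is an even polynomial, the Dirac action $\tr S(D)$ is invariant under $D\mapsto-D$, hence the matrix-model measure is invariant under $H\mapsto-H$, so $\rho$ is symmetric, $\tilde\rho=\rho$, and again $\rho_D=\rho*\rho$. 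The main obstacle is therefore not the eigenvalue bookkeeping of the first step, which is elementary, but securing the $N^{-2}$ suppression of connected two-trace correlators for a genuinely multi-trace ensemble; this is where the genus-expansion and one-cut results of Section~3 are indispensable, and it is also the place where one must be careful that every ``$\lim_{N\to\infty}$'' above is read as an identity of formal power series in the couplings.
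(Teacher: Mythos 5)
Your proposal is correct and follows essentially the same route as the paper: the binomial expansion of $\tr D^{\ell}$ into products $\tr H^{j}\tr H^{\ell-j}$, the genus expansion of Section~3 to show the connected two-trace correlator is suppressed by $N^{-2}$ so that planar expectations factorize, and the identification of the resulting moment sequence with that of $\rho*\rho$ (the paper phrases this last step via the exponential generating function and the Laplace-transform convolution property rather than moment-by-moment, which is immaterial). Your explicit treatment of the sign in type $(0,1)$ via evenness of the potential matches the paper's Appendix~A observation that odd-moment contributions vanish.
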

The proof is presented in Section four. 
\begin{remark}
	It is often the case that one finds a convergent model's eigenvalue distribution coincides with its formal counterpart in the large $N$ limit. In such a case  Theorem 2.1 applies to the corresponding  convergent model. It will be discussed later once the 1-cut lemma is introduced as to when precisely this theorem applies to convergent models. The spectral density function for convergent matrix models of this type can be found using the methods in \cite{First paper}.
	
\end{remark}

\subsection{2-matrix Dirac ensembles} 
Consider finite real spectral triples $(A, \mathcal{H}, D)$ where the algebra is  $A= M_N(\mathbb{C})$ and the Hilbert space is $ \mathcal{H} = \mathbb{C}^{2} \otimes M_N(\mathbb{C})$. The three $p+q =2$ noncommutative geometries from \cite{Barrett2016} are as follows:
\begin{enumerate}
	\item Type (2, 0):  let  \begin{equation*}
	\gamma^{1} = \begin{pmatrix}
	1 & 0 \\
	0 & -1
	\end{pmatrix}, 
	\quad \quad \gamma^{2} = \begin{pmatrix}
	0 & 1 \\
	1 & 0
	\end{pmatrix}.
	\end{equation*}
	Then, 
	\begin{equation*}
	D = \gamma^{1}\otimes \{H_{1},\cdot\} + \gamma^{2} \otimes \{H_{2},\cdot\},
	\end{equation*}
	where $H_1$ and $H_2$ are Hermitian matrices. 
	
	\item Type (1,1):  let  \begin{equation*}
	\gamma^{1} = \begin{pmatrix}
	1 & 0 \\
	0 & -1
	\end{pmatrix},
	\quad \quad \gamma^{2} = \begin{pmatrix}
	0 & 1 \\
	-1 & 0
	\end{pmatrix}.
	\end{equation*}
	Then,  
	\begin{equation*}
	D = \gamma^{1}\otimes \{H,\cdot\} + \gamma^{2} \otimes [L,\cdot],
	\end{equation*}
	where $H$ is Hermitian and $L$ is skew-Hermitian.
	
	\item Type (0,2); let \begin{equation*}
	\gamma^{1} = \begin{pmatrix}
	i & 0 \\
	0 & -i
	\end{pmatrix},
	\quad \quad \gamma^{2} = \begin{pmatrix}
	0 & 1 \\
	-1 & 0
	\end{pmatrix}.
	\end{equation*}
	Then  
	\begin{equation*}
	D = \gamma^{1}\otimes [L_{1},\cdot] + \gamma^{2} \otimes [L_{2},\cdot],
	\end{equation*}
	where $L_1, L_2$ are both skew-Hermitian. 
\end{enumerate}
 Our goal will be to apply the substitution $L = iH$ for each skew-Hermitian matrix $L$ in the above geometries to get these geometries strictly in terms of Hermitian matrices. The transformed operators and gamma matrices are 
 \begin{enumerate}
 	\item Type (2,0) with \begin{equation*}
 	\gamma^{1} = \begin{pmatrix}
 	1 & 0 \\
 	0 & -1
 	\end{pmatrix}
 	\qquad \qquad \gamma^{2} = \begin{pmatrix}
 	0 & 1 \\
 	1 & 0
 	\end{pmatrix},
 	\end{equation*}
 	and 
 	\begin{equation*}
 	D = \gamma^{1}\otimes \{H_{1},\cdot\} + \gamma^{2} \otimes \{H_{2},\cdot\}.
 	\end{equation*}
 	\item Type (1,1) with \begin{equation*}
 	\gamma^{1} = \begin{pmatrix}
 	1 & 0 \\
 	0 & -1
 	\end{pmatrix}
 	\qquad \qquad \gamma^{2} = \begin{pmatrix}
 	0 & i \\
 	-i & 0
 	\end{pmatrix},
 	\end{equation*}
 	and 
 	\begin{equation*}
 	D = \gamma^{1}\otimes \{H_{1},\cdot\} + \gamma^{2} \otimes [H_{2},\cdot].
 	\end{equation*}
 	\item Type (0,2) with \begin{equation*}
 	\gamma^{1} = \begin{pmatrix}
 	-1 & 0 \\
 	0 & 1
 	\end{pmatrix}
 	\qquad \qquad \gamma^{2} = \begin{pmatrix}
 	0 & i \\
 	-i & 0
 	\end{pmatrix},
 	\end{equation*}
 	and 
 	\begin{equation*}
 	D = \gamma^{1}\otimes [H_{1},\cdot] + \gamma^{2} \otimes [H_{2},\cdot].
 	\end{equation*}
 \end{enumerate}

\begin{lemma}
	For all p+q=2 models, odd trace powers of the Dirac operator are equal to zero.
\end{lemma}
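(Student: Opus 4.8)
The plan is to exploit the $\mathbb{Z}/2$-grading coming from the chirality operator. In each of the three $p+q=2$ geometries---and also in their Hermitian reformulations obtained via $L\mapsto iH$---the Dirac operator acts on $\mathcal{H}=\mathbb{C}^{2}\otimes M_{N}(\mathbb{C})$ and has the form $D=\gamma^{1}\otimes T_{1}+\gamma^{2}\otimes T_{2}$, where $\gamma^{1},\gamma^{2}\in M_{2}(\mathbb{C})$ are the displayed gamma matrices and $T_{1},T_{2}$ are the corresponding commutator or anticommutator operators on $M_{N}(\mathbb{C})$. The first step is to record that in all of these cases $\gamma^{1}$ and $\gamma^{2}$ are invertible and anticommute; this is immediate from the explicit $2\times 2$ matrices, and is just the Clifford relation $\gamma^{i}\gamma^{j}+\gamma^{j}\gamma^{i}=2\eta^{ij}$. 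Consequently $\Gamma:=\gamma^{1}\gamma^{2}$ is invertible and satisfies $\Gamma\gamma^{i}\Gamma^{-1}=-\gamma^{i}$ for $i=1,2$, so that conjugation by $\Gamma\otimes I$ negates $D$:
\begin{equation*}
(\Gamma\otimes I)\,D\,(\Gamma\otimes I)^{-1}=\sum_{i=1}^{2}(\Gamma\gamma^{i}\Gamma^{-1})\otimes T_{i}=-D .
\end{equation*}
This is nothing but the finite-dimensional incarnation of the even-spectral-triple axiom $\gamma D=-D\gamma$.

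With this in hand the conclusion is immediate. For any odd $\ell$, cyclicity of the trace on $\mathcal{H}$ gives
\begin{equation*}
\tr\!\big(D^{\ell}\big)=\tr\!\big((\Gamma\otimes I)\,D^{\ell}\,(\Gamma\otimes I)^{-1}\big)=\tr\!\big((-D)^{\ell}\big)=(-1)^{\ell}\,\tr\!\big(D^{\ell}\big)=-\,\tr\!\big(D^{\ell}\big),
\end{equation*}
hence $\tr(D^{\ell})=0$. If one prefers an elementary bookkeeping argument of the same flavour as the $1$-matrix computations above, one can instead expand $D^{\ell}=\sum_{i_{1},\dots,i_{\ell}}(\gamma^{i_{1}}\cdots\gamma^{i_{\ell}})\otimes(T_{i_{1}}\cdots T_{i_{\ell}})$ and factor $\tr(D^{\ell})=\sum \operatorname{tr}_{2}(\gamma^{i_{1}}\cdots\gamma^{i_{\ell}})\,\tr(T_{i_{1}}\cdots T_{i_{\ell}})$, where $\operatorname{tr}_{2}$ denotes the trace on the $\mathbb{C}^{2}$ factor; when $\ell$ is odd the Clifford relations collapse each $\gamma^{i_{1}}\cdots\gamma^{i_{\ell}}$ to a nonzero scalar multiple of $\gamma^{1}$ or of $\gamma^{2}$, and $\operatorname{tr}_{2}\gamma^{1}=\operatorname{tr}_{2}\gamma^{2}=0$ (again because $\gamma^{j}=-\Gamma\gamma^{j}\Gamma^{-1}$), so every summand vanishes.

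I do not expect any serious obstacle here. The only point that requires genuine attention is the very first one: one must check the anticommutation and invertibility of $\gamma^{1}$ and $\gamma^{2}$ for all three geometries \emph{and} for their Hermitian versions, since the substitution $L\mapsto iH$ modifies $\gamma^{2}$. Inspecting the six displayed pairs of gamma matrices reduces this to a one-line verification in each case---each $\gamma^{i}$ has determinant $\pm 1$, and the diagonal/off-diagonal pattern forces $\gamma^{1}\gamma^{2}+\gamma^{2}\gamma^{1}=0$.
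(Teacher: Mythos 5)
Your proof is correct, and all the hypotheses you flag do check out: in each of the six displayed pairs, $\gamma^{1}$ is $\mathrm{diag}(d,-d)$ and $\gamma^{2}$ is anti-diagonal and invertible, so they anticommute and $\Gamma=\gamma^{1}\gamma^{2}$ is an invertible grading with $\Gamma\gamma^{i}\Gamma^{-1}=-\gamma^{i}$. Your primary argument --- conjugation by $\Gamma\otimes I$ negates $D$, so $\tr D^{\ell}=(-1)^{\ell}\tr D^{\ell}$ by cyclicity --- is a cleaner, global repackaging of what the paper does term by term: the paper expands $\tr D^{n}$ into sums of coefficients $\tr(\gamma^{\mu_{1}}\cdots\gamma^{\mu_{n}})$ with $n$ odd, splits into the case where all gammas coincide and the case where they do not, and in the latter invokes as a ``well known property'' that the trace of the chirality operator against an odd product of gammas vanishes. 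Your symmetry argument avoids that case analysis and does not rely on any black-box trace identity, which is a genuine gain in rigor (the paper's first case, arguing via skew-Hermiticity, only shows the trace is purely imaginary rather than zero, whereas your reduction of any odd word to a nonzero scalar times $\gamma^{1}$ or $\gamma^{2}$, each traceless, closes that gap). Your alternative ``bookkeeping'' argument is essentially the paper's proof carried out explicitly, so the two routes meet there; the only thing the paper's phrasing offers that yours does not is a hint at how the statement generalizes to higher even Clifford modules, but for $p+q=2$ your version is the tighter one.
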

\begin{proof}
	This can be proven by showing the trace of any odd number of gamma matrices in even Clifford modules is zero. This is because in the calculation of the trace of powers of the Dirac operator, all matrix variables have a coefficient that is the trace of a product of gamma matrices
	\begin{equation*}
	\tr (\gamma^{\mu_{1}}...\gamma^{\mu_{n}}),
	\end{equation*}
	where $n\geq 3$ is odd. First suppose that all the gamma matrices are the same. Then the odd powers of skew-Hermitian matrices is itself skew-Hermitian, and therefore traceless. Now suppose that at least one of them is different then the product can be rewritten using the cyclic property of trace as 
	\begin{equation*}
	\pm\tr (\gamma^{\mu_{1}'}\gamma^{\mu_{2}'}...\gamma^{\mu_{n-2}'}\Gamma),
	\end{equation*}
	where $\Gamma$ is the chirality operator. It is a well known property that the trace of the chirality operator times an odd number of gamma matrices is zero.
\end{proof}

\begin{prop} Consider a formal Dirac ensemble of  the form 
	\begin{equation*}
		\int_{\mathcal{G}} e^{-\frac{1}{8}\tr D^{2} - \frac{t_{4}}{16}\tr D^{4} -\frac{t_{6}}{32}\tr D^{6}} dD.
	\end{equation*}
	In the large $N$ limit the underlying random matrix model is the same for all $p+q =2$ geometries.
\end{prop}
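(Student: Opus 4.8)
The plan is to put each of the three $p+q=2$ geometries in the Hermitian form obtained above (via $L=\ci H$), expand $\tr D^{2}$, $\tr D^{4}$ and $\tr D^{6}$ as multi‑trace polynomials in $H_{1},H_{2}$, and show that the three resulting potentials $V(H_{1},H_{2})$ agree up to terms that, by Appendix A, do not influence the large $N$ limit. The basic identity is
\begin{equation*}
\tr D^{n} \;=\; \sum_{\mu\in\{1,2\}^{n}} \tr\!\bigl(\gamma^{\mu_{1}}\cdots\gamma^{\mu_{n}}\bigr)\,\tr\!\bigl(T_{\mu_{1}}\cdots T_{\mu_{n}}\bigr),
\qquad T_{i}=H_{i}\otimes I_{N}+\varepsilon_{i}\,I_{N}\otimes H_{i}^{T},
\end{equation*}
where the first trace is over $\mathbb{C}^{2}$, the second over $M_{N}(\mathbb{C})\cong\mathbb{C}^{N}\otimes(\mathbb{C}^{N})^{\ast}$, and $\varepsilon_{i}=+1$ for an anticommutator, $\varepsilon_{i}=-1$ for a commutator; so apart from the gamma matrices the only geometry‑dependent data are the signs $\varepsilon_{1},\varepsilon_{2}$. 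By Lemma 2.3 only words $\mu$ in which each of $\gamma^{1},\gamma^{2}$ occurs an even number of times contribute, so every surviving term of $\tr D^{n}$ has even degree in $H_{1}$ and even degree in $H_{2}$; in particular the potential is invariant under $H_{1}\mapsto -H_{1}$ and, separately, under $H_{2}\mapsto -H_{2}$.

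The first step is to note that the gamma‑matrix traces are the same in all three geometries: in each of the three transformed models one checks directly that $(\gamma^{1})^{2}=(\gamma^{2})^{2}=I_{2}$ and $\gamma^{1}\gamma^{2}=-\gamma^{2}\gamma^{1}$, and together with $\tr I_{2}=2$ and cyclicity of the trace these relations force $\tr\gamma^{1}=\tr\gamma^{2}=\tr(\gamma^{1}\gamma^{2})=0$, hence determine every $\tr(\gamma^{\mu_{1}}\cdots\gamma^{\mu_{n}})$ unambiguously. The second step is to expand $\tr(T_{\mu_{1}}\cdots T_{\mu_{n}})$ using that left and right multiplication on $M_{N}(\mathbb{C})$ commute: the term labelled by a subset $S$ of positions equals $\bigl(\prod_{k\in S}\varepsilon_{\mu_{k}}\bigr)\tr(W_{1})\tr(W_{2})$, where $W_{1}$ and $W_{2}$ are the subwords in the $H_{\mu_{k}}$ read off from the positions outside and inside $S$ (the latter in reverse order). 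Two facts then do the work: (i) a term carrying a factor of $N$ — i.e.\ a single trace of a length‑$n$ word — arises only for $S=\varnothing$ or $S=\{1,\dots,n\}$, and for both of these the $\varepsilon$‑prefactor equals $1$ (for $S=\{1,\dots,n\}$ because each $\gamma^{i}$ occurs an even number of times); (ii) a double‑trace term depends on the $\varepsilon_{i}$ exactly when $W_{2}$, equivalently $W_{1}$, has odd degree in $H_{1}$ or in $H_{2}$.

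Combining these: the $N$‑weighted single‑trace part of each $\tr D^{2j}$ is $\varepsilon$‑free, and by the first step it is literally the same polynomial in $H_{1},H_{2},N$ for all three geometries; the $\varepsilon$‑free double‑trace terms (such as $(\tr H_{i}^{2})^{2}$ and $\tr H_{1}^{2}\,\tr H_{2}^{2}$ appearing in $\tr D^{4}$) are likewise geometry‑independent; and every remaining term is $\varepsilon$‑dependent and is a product of traces one of whose factors is the trace of a word of odd degree in some $H_{i}$. By the $H_{i}\mapsto-H_{i}$ symmetry noted above, such a factor has vanishing limiting moments, so — precisely the mechanism explained in Appendix A and used already for the $1$‑matrix models — these terms drop out of the large $N$ limiting eigenvalue distribution of the underlying matrix model. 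Hence the three $p+q=2$ geometries give the same random matrix model in the limit. The one genuinely delicate point is the last one: that the $\varepsilon$‑dependent multi‑trace corrections are irrelevant is not mere power counting but rests on the genus expansion of Section 3 and the analysis of Appendix A, and that is the step I would state and invoke most carefully. Once it is granted, the explicit expansions of $\tr D^{4}$ and $\tr D^{6}$ are only bookkeeping — the sole subtlety being that the right‑multiplication factors produce reversed subwords, whose traces are the complex conjugates of, hence asymptotically equal to, those of the forward words.
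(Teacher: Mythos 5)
Your argument is correct and it takes a genuinely different route from the paper. The paper's entire proof is a citation: it reads off the explicit expansions of $\tr D^{2}$, $\tr D^{4}$, $\tr D^{6}$ from Section 4 of Perez-Sanchez and then observes, via Appendix A, that the odd-trace multi-trace terms do not affect the large $N$ limit. You instead give a structural argument that never needs the explicit formulas: you isolate the only geometry-dependent data as the signs $\varepsilon_{1},\varepsilon_{2}$ (commutator versus anticommutator), note that the transformed gamma matrices in all three geometries satisfy the same Clifford relations so that every trace $\tr(\gamma^{\mu_{1}}\cdots\gamma^{\mu_{n}})$ is forced to agree, and then show that the $\varepsilon$-prefactor $\prod_{k\in S}\varepsilon_{\mu_{k}}$ can differ from $1$ only on multi-trace terms one of whose factors is the trace of a word of odd degree in some $H_{i}$ --- precisely the terms killed by the $H_{i}\mapsto -H_{i}$ symmetry and the Appendix A mechanism. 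This buys something the paper explicitly says it cannot get: the authors remark that the result "may very well be true for higher powers but since no general formula for trace of powers of $D$ ... is known, it is difficult to prove such a result," whereas your argument applies verbatim to every even power of $D$ and so to arbitrary even potentials. Two caveats, both of which you already flag at the right level of care: the paper's Appendix A only treats single-matrix terms of the form $\tr H^{q}\tr H^{q}$, so extending the vanishing argument to mixed double traces like $\tr(H_{1}H_{2}^{2})\tr H_{1}$ is an extrapolation (one the paper itself makes no more rigorously than you do); and the reversed subwords coming from right multiplication must be handled as you do, by pairing each word with its reversal so that the single-trace, $N$-weighted part is manifestly real and geometry-independent.
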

Proof of this can be seen from using the explicit formulas given in section 4 of \cite{Sanchez} and knowing that all odd trace matrix powers contribute nothing in the large $N$ limit, see Appendix A. This result may very well be true for higher powers but since no general formula for trace of powers of $D$ for this class of models is known, it is difficult to prove such a result.

 Note however, the Dirac operators here are still different which explains their distinct behaviour seen in  \cite{Barrett2016,glaser}. Furthermore, in general the relationship between a Dirac operator's spectrum and its random matrix spectrum is unclear.

\section{Bitracial Matrix Models}
In this section we analyze the bitracial single matrix models whose form originates from the (1,0) and (0,1) geometries in \cite{Barrett2016}. Examples of these model have been analyzed to some extent in both formal \cite{AK} and convergent cases \cite{First paper}. Consider  the following formal matrix integral over the space of Hermitian matrices.

\begin{equation}\label{multi trace ensemble}
Z =\int_{\mathcal{H}_{N}} e^{-V(H)} dH, 
\end{equation}
where the potential can be written as a bitracial polynomial 
\begin{equation}\label{potential}
V(H) = \frac{N}{t}\tr H^{2} - t_{1,1}\tr H\tr H - \sum_{j=3}^{d}\left(\frac{N}{jt}t_{j}\tr H^{j}+ \sum_{k=1}^{j } \frac{t_{j-k,k}}{(j-k)k}\tr H^{j-k} \tr H^{k}\right),
\end{equation}
where $t_{j}$ and $t_{\ell-k,k}$ are coupling constants and $t>0$.

We define the moments  and cumulants of the random Hermitian matrix ensemble  as 
\begin{equation*}
\mathcal{T}_{\ell} := \langle \tr H^{\ell} \rangle ,
\end{equation*} 
\begin{equation*}
\mathcal{T}_{\ell_{1},...,\ell_{k}} := \langle \tr, H^{\ell_{1}}...\tr H^{\ell_{k}} \rangle_{c}
\end{equation*}
and the connected k-point correlators 
\begin{equation*}
W_{k}(x_{1},...,x_{k}):= \sum_{\ell_{1},...,\ell_{k}=0}^{\infty} \frac{\mathcal{T}_{\ell_{1},...,\ell_{k}}}{x_{1}^{\ell_{1}+1}...x_{k}^{\ell_{k}+1}}.
\end{equation*}

We will give a brief summary in this section as to how these integrals can be used to count the number of ways to construct surfaces called stuffed maps. For more about stuffed maps see \cite{blobbed1,blobbed,blobbed2}.

\subsection{Formal bitracial matrix models and stuffed maps}  

Formal matrix models have an interpretation as being sums over various types of maps \cite{Eynard2018}. More specifically a multitrace matrix model has a graphical interpretation as a formal sum in terms of stuffed maps \cite{blobbed}.  An orientable surface of genus $g$ with $k$ boundaries of fixed lengths is called a 2-cell of topology $(k,h)$. 
\begin{figure}[H]
	\includegraphics[width=8cm]{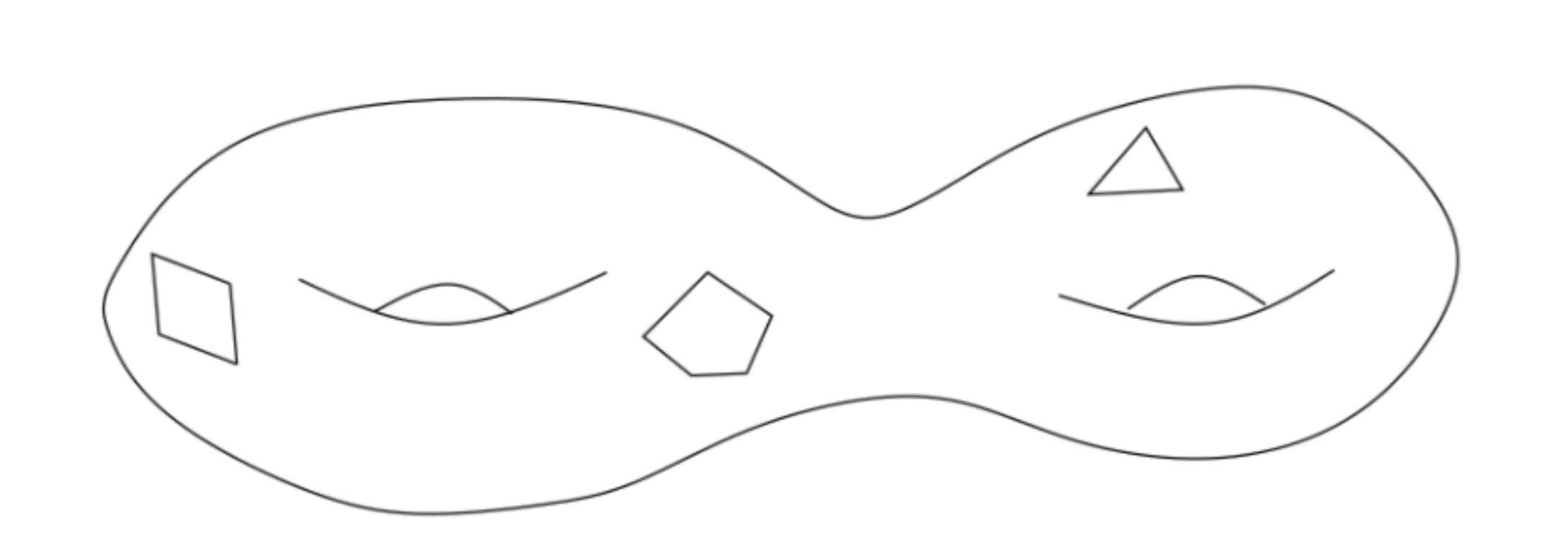}
	\centering
	\caption{An example of a genus two surface resulting from gluing various 2-cells.}
\end{figure}
When 2-cells are glued together along their edges, in an orientation preserving manner, the resulting surface is called a \text{stuffed map}. 
\begin{definition}
	A stuffed map of topology $(n,g)$ with perimeters $(\ell_{1},...,\ell_{k})$ is a genus $g$ orientable surface with $n$ marked 2-cells with the topology of  discs of lengths $(\ell_{1},...,\ell_{k})$ \cite{blobbed}.
\end{definition}

For a basic example consider a simple 2-cell with the topology of a disc and with four edges i.e. a quadrangle. It can be glued into either a map with the topology of a disc or torus.
\begin{figure}[H]
	\includegraphics[width=10cm]{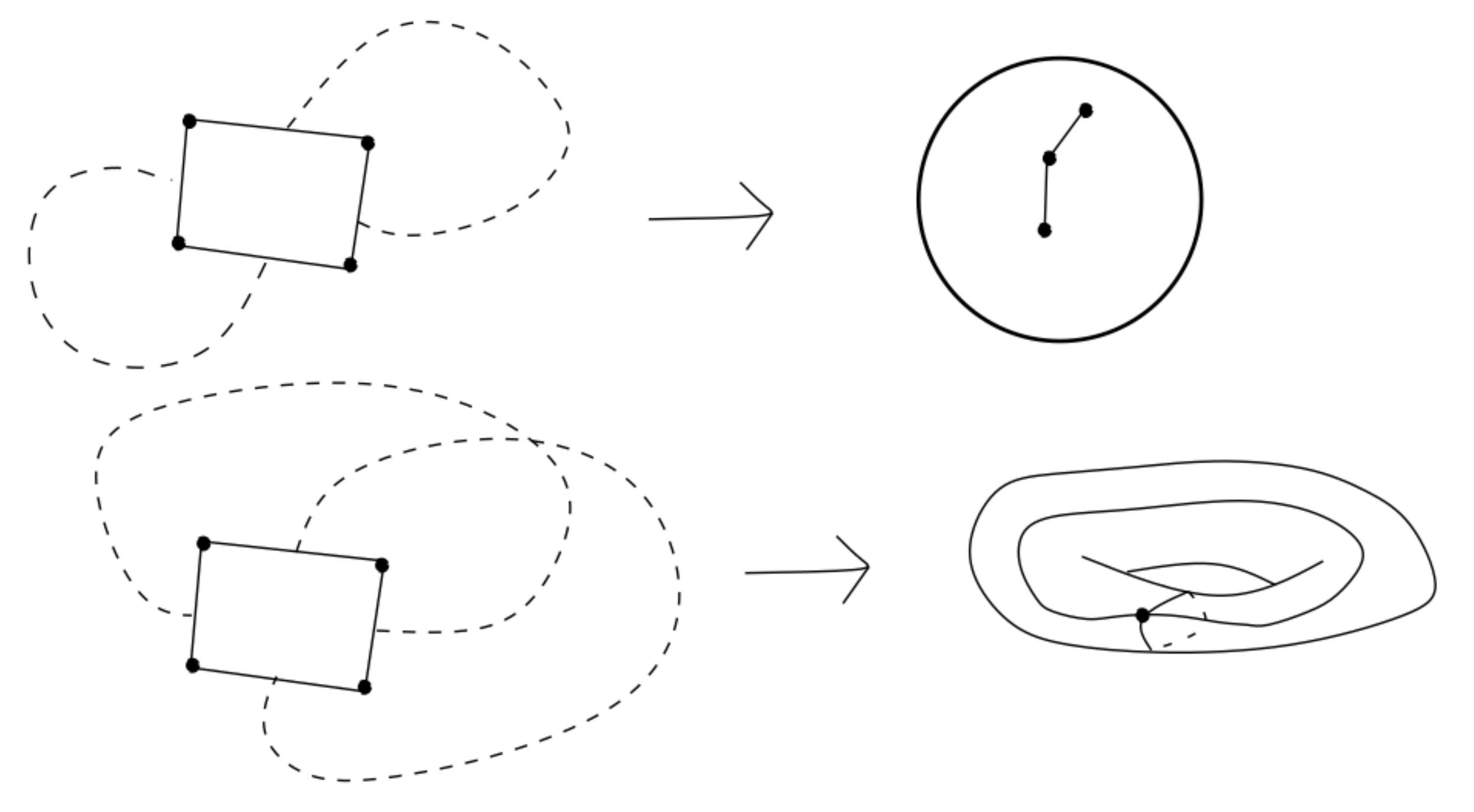}
	\centering
	\caption{An example of two possible surfaces resulting from gluing a quadrangle.}
\end{figure}

As another example, consider one possible gluing of a 2-cell with two boundaries each of length 2 with a quadrangle.
\begin{figure}[H]
	\includegraphics[scale=0.6]{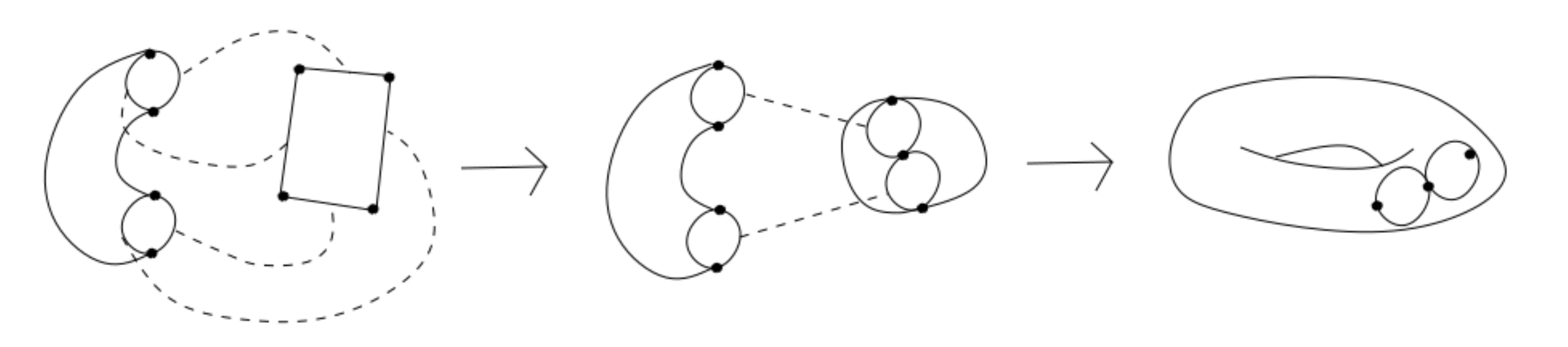}
	\centering
	\caption{An example of a possible surface resulting from gluing various 2-cells.}
\end{figure}

As one can see from the second example, counting the number of gluings by hand quickly becomes very difficult. Blobbed recursion is a beautiful tool for this task, but is not the focus of this paper. For more information about blobbed topological recursion see \cite{blobbed1, blobbed,blobbed2}.

To see how this graphical interpretation arises from matrix models consider terms of the form
\begin{equation*}
\left(\frac{N}{t}\right)^{2-2h-k}\tr H^{\ell_{1}}...\tr H^{\ell_{k}}
\end{equation*}
in the potential of a matrix model. For a fixed $h$ and $k$ there is a unique corresponding 2-cell of topology $(h,k)$. Applying Wick's theorem to compute the Gaussian expectation is graphically represented by gluing the edges of the boundaries of this 2-cell together in all possible ways \cite{Brezin1978}. Once each boundary has all its edges glued, in some orientation preserving way, we are left with a stuffed map. Wick's theorem thus tells us that we are summing over such pairings i.e stuffed maps: 
\begin{equation*}
\left(\frac{N}{t}\right)^{2-2h-k} \frac{1}{k!\ell_{1} ...\ell_{k}}\langle \tr H^{\ell_{1}}...\tr H^{\ell_{k}}\rangle_{0} = \sum_{\Sigma}\frac{t^{v(\Sigma)}}{|\text{Aut}(\Sigma)|}\left(\frac{N}{t}\right)^{\chi(\Sigma)}
\end{equation*}
where a weight $t$ is assigned to each vertex and $\chi(\Sigma) = 2-2g -k$ is the Euler characteristic of each resulting stuffed map by 't Hooft's classical argument \cite{Brezin1978,Eynard2018}. More generally we may write the expectation values of the model as
\begin{equation*}
\left\langle \prod_{i=1}^{m}\frac{1}{n_{i}!}\left(	\left(\frac{N}{t}\right)^{2-2h_{i}-k_{i}} \frac{1}{k_{i}!\ell_{1_{i}} ...\ell_{k_{i}}} \tr H^{\ell_{1_{i}}}...\tr H^{\ell_{k_{i}}}\right)^{n_{i}} \right\rangle = \sum_{\text{Stuffed Maps}\,\, \Sigma}\frac{t^{V(\Sigma)}}{|\text{Aut}(\Sigma)|}\left(\frac{N}{t}\right)^{\chi(\Sigma)},
\end{equation*} 
where the sum is over all stuffed maps (not necessarily connected) glued from $n_{i}$ 2-cells of topology $(h_{i},k_{i})$ with boundaries of lengths $\ell_{1_{i}},...,\ell_{k_{i}}$ for $1\leq i \leq m$.

\begin{definition}
	Let $\mathbb{S}\mathbb{M}^{g}_{k}(v)$ be the set of connected stuffed maps of genus $g$  and $v$ vertices glued from 
	\begin{itemize}
		\item $k$ boundaries with the topology of the disc,
		\item $n_{3}$ triangles, $n_{4}$ quandrangles, ... $n_{d}$ $d$-gons, 
		\item $m_{i,j}$ cylinders of a $j$-gon and $k$-gon such that $i+j = q$ for $ 2\leq q \leq d$, and $i \not =0$, $j\not =0$,  
		\item $\mathbb{S}\mathbb{M}^{0}_{2}(1)=\{.\}$.
	\end{itemize}

\end{definition}

In multimatrix models, different colours correspond to different matrix variables. For example 
\begin{equation*}
	Z = \int_{\mathcal{H}_{N}^{2}} e^{-
	\frac{N}{t}\left(\tr A^{4} + \tr B^{4}+\tr A B\right)} dAdB.
\end{equation*}
When this model is treated as a formal matrix model, it is a sum over the gluings of quadrangles of two possible colours and a genus zero 2-cell with two perimeters each of different colour.

For a fixed genus with a given  number of boundaries and vertices, and a given topologies of 2-cells, we wish to show that the set of all possible stuffed maps is finite. This would allow us to reorganize formal multitrace multimatrix  integrals and prove what is known as a genus expansion. With this in mind we define the following the type of map.

\begin{definition}
	An $M$-coloured stuffed map of genus $g$ with $k$ boundaries is a genus $g$ map glued from 2-cells of any topology whose boundaries' edges can be any of $M$ different colours. 
\end{definition}

We are in particular are interested in when the coloured stuffed maps are glued strictly from 2-cells with the topologies of the disc and the cylinder.

\begin{definition}
	Let $\mathbb{S}_{M}\mathbb{M}^{g}_{k}(v)$ be the set of connected stuffed maps of genus $g$  and $v$ vertices glued from 
	\begin{itemize}
		\item $k$ boundaries with the topology of the disc,
	\item $n_{3}^{r}$ triangles, $n_{4}^{r}$ quadrangles, ... $n_{d}^{r}$ $d$-gons of any of $M$ colours indexed by $r$,
	\item $m_{i,j}^{r}$ cylinders of a $j$-gon and $k$-gon of any of $M$ colours indexed by $r$, for $ 2\leq i,j \leq d$, such that $i \not =0$ and $j\not =0$,
	\item $\mathbb{S}_{M}\mathbb{M}^{0}_{2}(1)=\{.\}$.
	\end{itemize}
\end{definition}

\begin{theorem}\label{finite set}
	The set $\mathbb{S}_{M}\mathbb{M}^{g}_{k}(v)$ of all maps described above is finite.
\end{theorem}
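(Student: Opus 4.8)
The plan is to show that the set $\mathbb{S}_{M}\mathbb{M}^{g}_{k}(v)$ is finite by a two-step argument: first bound the number of 2-cells that can appear in such a map, then observe that finitely many 2-cells can be glued in only finitely many combinatorial ways. The key topological input is the Euler characteristic formula for the underlying closed genus-$g$ surface with $k$ boundary discs. Each stuffed map $\Sigma$ is built from 2-cells of disc and cylinder topology glued along their boundary edges; the total surface has Euler characteristic $\chi(\Sigma) = 2 - 2g - k$. I would write $\chi(\Sigma)$ as the alternating sum $V - E + F$, where $V = v$ is the number of vertices, $E$ is the number of (glued) edges, and $F$ counts the 2-cells (faces), and then split the 2-cell contribution according to topology: each disc 2-cell contributes $+1$ to the Euler-characteristic bookkeeping while each cylinder 2-cell contributes $0$ (since a cylinder has $\chi = 0$). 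More carefully, gluing the edges pairwise means $2E = \sum (\text{perimeters of all 2-cells})$, and one gets a relation of the shape
\begin{equation*}
2 - 2g - k = v - \tfrac{1}{2}\!\left(\sum_{\text{discs}} \ell_i + \sum_{\text{cylinders}} (\ell'_j + \ell''_j)\right) + \#\{\text{discs}\} \, ,
\end{equation*}
with the cylinders making no net face contribution. Since each polygon has perimeter at least $3$ (triangles are the smallest discs), and each cylinder has two boundary components each of perimeter at least $2$, the right-hand side forces the number of 2-cells of each type, and hence the total perimeter $2E$, to be bounded in terms of $g$, $k$, and $v$ only. This is the crux: once $v$ is fixed and $g,k$ are fixed, the balance between the negative edge term and the positive vertex-plus-disc term caps how many polygons and cylinders can occur.

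Having bounded the number of 2-cells (and therefore the number $E$ of edges, since perimeters are bounded), the second step is routine finiteness: a stuffed map of the given type is determined by a finite combinatorial datum — a finite collection of polygons and cylinders (finitely many choices, given the bounds and the $M$ colours available for edges), together with a pairing of their boundary edges respecting colour and orientation, and an assignment distributing the $v$ vertices among the resulting identified edge-endpoints. Each of these is a choice from a finite set: finitely many 2-cells, finitely many edge-pairings (a matching on a finite edge set), finitely many colourings, and finitely many ways to place $v$ vertices. Hence $\mathbb{S}_{M}\mathbb{M}^{g}_{k}(v)$ injects into a finite product of finite sets and is itself finite. The base case $\mathbb{S}_{M}\mathbb{M}^{0}_{2}(1) = \{\,\cdot\,\}$ is consistent with this bookkeeping and can be checked directly.

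The main obstacle I anticipate is making the Euler-characteristic accounting fully rigorous in the presence of cylinder 2-cells, whose boundary circles may be glued to themselves or to polygon boundaries in ways that could in principle create degenerate configurations (e.g. a cylinder both of whose boundaries are glued together, or boundary circles of length not matching any available 2-cell edge count). One must be careful that "perimeter $\geq 3$ for polygons, $\geq 2$ per boundary for cylinders" genuinely holds under the definitions given, and that the gluing relation $2E = \sum(\text{perimeters})$ is exact rather than an inequality — i.e. that every boundary edge is glued to exactly one other. A secondary subtlety is ensuring connectedness is compatible with the bound: a priori one might worry that allowing many small cylinders keeps $v$ fixed while inflating the map, but each added cylinder, being connected into $\Sigma$, must share vertices/edges with the rest, and the Euler-characteristic constraint absorbs this. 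I would handle these points by carefully invoking the correspondence with the matrix-model Feynman expansion described earlier in Section 3, where exactly this pairing-of-half-edges structure and the power counting $(N/t)^{\chi(\Sigma)}$ are built in, so that the topological constraint is automatic from 't Hooft's argument rather than something to be re-derived from scratch.
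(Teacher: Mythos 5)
Your argument is correct, and it reaches the conclusion by a genuinely more direct route than the paper's. You apply Euler's formula once, globally, to the glued surface, via $\chi(\Sigma)=v-E+\sum_i\chi(S_i)$ with $\chi=1$ for each disc and $\chi=0$ for each cylinder: since every unmarked polygon has perimeter $j\geq 3$ it contributes $j/2-1\geq 1/2$ to the fixed quantity $E-\sum_i\chi(S_i)=v-2+2g$, and every cylinder contributes $(i+j)/2\geq 1$, so the number of 2-cells, hence of edges, hence of colourings and edge-pairings, is bounded in terms of $v$, $g$, $k$ alone. The paper instead cuts the stuffed map along the multi-boundary 2-cells (``bridges''), obtaining ordinary maps (``graph components'') to which Euler's formula is applied one at a time; it must then separately bound the number of components (by $v$), the number of genus-increasing self-bridges (by $g$), and the faces of perimeter $1$ or $2$ created on each component by the cut (the ``strips''). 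Your global count absorbs all of this at once, precisely because a cylinder enters the Euler characteristic with weight $0$ while still carrying at least one boundary edge on each side — that is the positivity that closes the bound. What the paper's longer route buys is the component-wise inequality $2v+2g+k\geq\tfrac12\bigl(\sum_j\ell_j+\sum_{i,j}(j-2)n_{i,j}\bigr)$ in the exact form reused afterwards for the genus expansion and the one-cut lemma, and a template that tracks where positive genus and short faces can hide when 2-cells of arbitrary topology are admitted. Two small points to tighten in your write-up: your displayed Euler relation should either count the $k$ marked discs among the faces (with $\chi(\Sigma)=2-2g$ for the closed surface) or omit their perimeters consistently from $2E$ — as written it mixes the two conventions; and the minimal perimeter of a single cylinder boundary in the paper's definition is $1$, not $2$, which is harmless since your argument only needs the total perimeter $i+j\geq 2$.
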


Consider an elementary  2-cell $C_{g,k}$ with genus $g$ and Euler characteristic strictly less than one with $k$ boundaries that are not connected by edges. When it is glued as part of a stuffed map it acts as a bridge between at most $k$ connected graphs embedded into a surface. Call these graphs the graph components of the stuffed map. Note that a 'usual map' (i.e only glued from 2-cells with the topology of a disc) has only one graph component. This concept is key to the proof below.
\begin{figure}[H]
	\includegraphics[width=8cm]{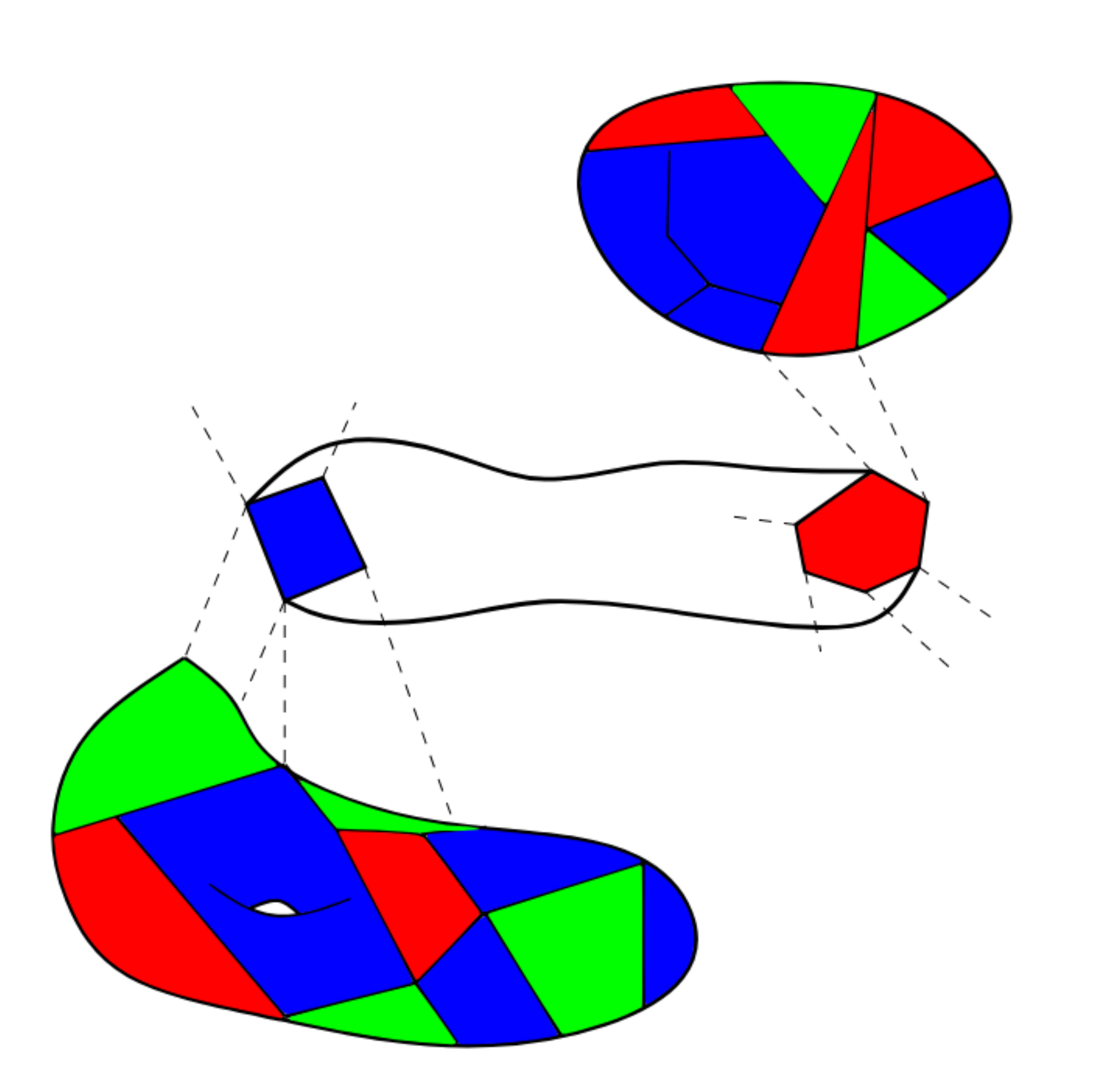}
	\centering
	\caption{The center 2-cell acts as a bridge between  the other two graphs created by the edges of the various coloured polygons.}
\end{figure}

\begin{proof}

	Consider an $M$-coloured stuffed map of genus $g$ with $k$ boundaries and $v$ vertices. The number of graph components $q$ is at most $v$ since each component, call them $C_{i}$, must have at least one vertex. The total genus of the stuffed map $g$ is  the sum of the genus $g_{i}$ of the $i$th component, the genus of two cells and the amount of handles created by the bridges connecting all components, $g_{B}$. The number of boundaries (i.e marked polygons) $k_{i}$ on each component must total $k$.
	If we remove the bridges and only keep the boundaries of all the 2-cells glued, the result are $q$ connected 'usual' maps $C_{i}$ glued from polygons, with $k_{i}$ boundaries and $g_{i}$ handles.

	 For each map we know by Euler's theorem 
	\begin{equation*}
	2-2g_{i} = k_{i}+f_{i}-e_{i} +v_{i}.
	\end{equation*}
	The number of faces is given by
	\begin{equation*}
		f_{i} = \sum_{j=1}^{d} n_{j}^{i}
	\end{equation*}
	where $n_{j}^{i}$ denotes the number of unmarked faces of the graph $C_{i}$ with length $j$. The number of edges is equal to
	\begin{equation*}
	e_{i} = \frac{1}{2}\left(\sum_{j=1}^{k_{i}}\ell_{j}(C_{i})+ \sum_{j=1}^{d}jn_{j}^{i} \right).
	\end{equation*} 
	This allows us to write
	
	\begin{equation*}
	v_{i}-2+2g_{i}+k_{i} = \frac{1}{2}\left(\sum_{j=1}^{k_{i}}\ell_{j}(C_{i})+ \sum_{j=3}^{d}(j-2)n_{j}^{i} + \sum_{j=1}^{2}(j-2)n_{j}^{i}\right).
	\end{equation*}
	Since when $j \geq 3$, we have $i-2 \geq 1$, we write
		\begin{equation*}
	v_{i}-2+2g_{i}+k_{i} + \frac{1}{2}n^{i}_{1} \geq \frac{1}{2}\left(\sum_{j=1}^{k_{i}}\ell_{j}(C_{i})+ \sum_{j=3}^{d}n_{j}^{i} \right).
	\end{equation*}
	
	Each variable for a fixed $C_{i}$ is fixed in left hand side except maybe $n_{1}^{i}$, hence on each component the number of faces with degree greater than or equal to three is finite if $n_{1}^{i}$ is finite. 
	
	Let any 2-cell with a degree one or two on a component be called a strip. Hence, for each $C_{i}$ polygons that are of degree one or two can only belong to a strip since all 2-cells with one boundary have a minimum length of three. We claim for each $C_{i}$ there are only finitely many strips attached and hence finitely many strips in total . 
	
	A bridge can do two possible things, it either connects a graph component to a new graph component or a graph component to itself (see picture). The number of bridges that connect to different graph components must be finite since the number of graph components is finite. This is because the number of vertices is fixed and each new graph component has at least one vertex. Each bridge of the latter type increases the genus of the stuffed map which is bounded by $g$. Thus the number of ways to glue bridges is finite.

	Hence, the number of strips and therefore $n^{i}_{1}$ and $n^{i}_{2}$ on each component is finite and since there are finitely many components, this completes the proof. We also know that $n^{i}_{1} \leq 2q$ for all $C_{i}$. Thus we find a useful inequality by summing the above equalities for all graph components:
	
	\begin{equation*}\label{Euler equality}
	v + 2g + k +
	q\geq   \frac{1}{2}\left(\sum_{j=1}^{k}\ell_{j}+ \sum_{i=1}^{q}\sum_{j=3}^{d}(j-2)n_{i,j}\right),
	\end{equation*}
	and since $q \leq v $ we have 
	\begin{equation*}\label{Euler equality2}
	2v + 2g + k +
	\geq   \frac{1}{2}\left(\sum_{j=1}^{k}\ell_{j}+ \sum_{i=1}^{q}\sum_{j=3}^{d}(j-2)n_{i,j}\right).
	\end{equation*}
	
\end{proof}

\begin{corollary}
	$\mathbb{S}\mathbb{M}^{g}_{k}(v)$ is a finite set.
\end{corollary}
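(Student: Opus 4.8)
The plan is to recognise that $\mathbb{S}\mathbb{M}^{g}_{k}(v)$ is nothing but the monochromatic instance of the coloured object already shown to be finite in Theorem \ref{finite set}. Comparing the definition of $\mathbb{S}\mathbb{M}^{g}_{k}(v)$ with that of $\mathbb{S}_{M}\mathbb{M}^{g}_{k}(v)$, the former is built from exactly the same admissible 2-cells — namely $k$ marked discs, polygons of degree at most $d$, and cylinders both of whose boundary perimeters are bounded in terms of $d$ — the only difference being the absence of colours. Hence $\mathbb{S}\mathbb{M}^{g}_{k}(v)=\mathbb{S}_{1}\mathbb{M}^{g}_{k}(v)$ (or at worst $\mathbb{S}\mathbb{M}^{g}_{k}(v)\subseteq\mathbb{S}_{1}\mathbb{M}^{g}_{k}(v)$ once one checks that the permitted perimeter ranges match up), so finiteness is immediate: take $M=1$ in Theorem \ref{finite set} and use that a subset of a finite set is finite.

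If instead one wants a self-contained argument, the steps are precisely those used in the proof of Theorem \ref{finite set}, which never actually invoked the presence of more than one colour. First I would decompose a stuffed map into its graph components $C_{i}$, glued to one another by the cylinder ``bridges''. Then I would apply Euler's formula on each $C_{i}$ to bound, in terms of the fixed data $v$, $g$, $k$ and the bounded polygon degrees, the quantity $\sum_{j\geq 3}n^{i}_{j}$ counting faces of degree at least three, modulo controlling the number $n^{i}_{1}$ of degree-one faces. Next I would observe that degree-one and degree-two faces can only occur inside strips, that the total number of bridges is bounded (each bridge to a fresh component costs a vertex, each self-bridge costs genus), and conclude that the number of strips — hence $n^{i}_{1}$ and $n^{i}_{2}$ — is finite on each of the finitely many components. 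Having thus bounded the number of 2-cells of every topology together with all perimeters, there remain only finitely many ways to glue them into a connected genus-$g$ surface with $v$ vertices.

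The main, and essentially only, point requiring care is the bookkeeping in the first approach: verifying that the list of 2-cell topologies allowed in the definition of $\mathbb{S}\mathbb{M}^{g}_{k}(v)$ is genuinely contained in the list allowed in the definition of $\mathbb{S}_{M}\mathbb{M}^{g}_{k}(v)$, in particular that the cylinder constraint $i+j\leq d$ appearing in the uncoloured definition is compatible with — indeed stronger than — the constraint $i,j\leq d$ in the coloured one. There is no genuine mathematical obstacle: the corollary is a direct specialisation of Theorem \ref{finite set}, and the only work is to confirm that uncoloured stuffed maps form a subclass of the $M=1$ coloured stuffed maps.
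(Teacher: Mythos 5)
Your proposal is correct and matches the paper's (implicit) argument: the corollary is stated without proof precisely because $\mathbb{S}\mathbb{M}^{g}_{k}(v)$ is the $M=1$ specialisation of the coloured stuffed maps covered by Theorem \ref{finite set}. Your bookkeeping remark about the cylinder perimeter constraints is a fair observation about a discrepancy in the paper's two definitions, but it does not affect the validity of the specialisation.
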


This result will be later used to prove that a wide class of multitrace random matrix models 
satisfy Brown's lemma, validating the assumptions made in \cite{AK}. Furthermore, in a similar manner as in \cite{Eynard2018}, the last inequality tells us that $v+2g+k\geq 0$ and since the number of maps for a fixed v, g, and k is finite we are able to define the formal power series that appears in the following theorem.
 
\begin{theorem}\label{genus expansio }
	Let $S$ be a real monic polynomial in $m$-variables with powers of $N$ in the coefficients, that is symmetric in each variable. Consider a formal matrix  integral of the form
	\begin{equation*}
		\int_{\mathcal{H}_{N}^{m}} e^{-V( H_{1},  H_{2},..., H_{m})} dH_{1}...dH_{m},
	\end{equation*}
	where the potential is a multitrace polynomial.
	
	Define the $k$-resolvent function to be 
	\begin{equation*}
		W_{k}(x_{1},...,x_{k}) = \sum_{\ell_{1},...,\ell_{k}}^{\infty}\frac{\langle \tr H_{1}^{\ell_{1}}...\tr H_{1}^{\ell_{k}}\rangle_{c}}{x_{1}^{\ell_{1}+1} ...x_{k}^{\ell_{k}+1}}.
	\end{equation*}
	Then the $k$-resolvent has a genus expansion
	\begin{equation*}
	W_{k}(x_{1},...,x_{k}) = \sum_{g=0}^{\infty}\left(\frac{N}{t}\right)^{2-2g-k} W^{g}_{k}(x_{1},...,x_{k}).
	\end{equation*}   
\end{theorem}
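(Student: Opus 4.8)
The plan is to upgrade the combinatorial finiteness result of Theorem \ref{finite set} (and its Corollary) into an analytic statement about the connected $k$-point correlators. First I would expand the matrix integral perturbatively: writing $V = V_{\mathrm{Gauss}} + V_{\mathrm{int}}$, where $V_{\mathrm{Gauss}} = \frac{N}{t}\sum_i \tr H_i^2$ and $V_{\mathrm{int}}$ collects all higher multitrace terms, one expands $e^{-V_{\mathrm{int}}}$ as a formal power series and computes each resulting Gaussian expectation by Wick's theorem. As recalled in Section three, each Wick pairing of the edges of the 2-cells appearing in $V_{\mathrm{int}}$ produces a (possibly disconnected) stuffed map $\Sigma$, weighted by $t^{V(\Sigma)}(N/t)^{\chi(\Sigma)}/|\mathrm{Aut}(\Sigma)|$. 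Restricting to the connected correlator $\langle \tr H_1^{\ell_1}\cdots \tr H_1^{\ell_k}\rangle_c$ amounts, by the standard linked-cluster / exponential formula, to summing only over \emph{connected} stuffed maps with $k$ marked boundary discs of the prescribed perimeters. Collecting the powers of $N/t$, the coefficient of $(N/t)^{2-2g-k}$ in this sum is exactly $\sum_{v}\sum_{\Sigma \in \mathbb{S}_M\mathbb{M}^g_k(v)} t^{v}/|\mathrm{Aut}(\Sigma)|$ times the monomial $x_1^{-\ell_1-1}\cdots x_k^{-\ell_k-1}$.

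The key step is then to invoke Theorem \ref{finite set}: for each fixed $g$, $k$, $v$ the set $\mathbb{S}_M\mathbb{M}^g_k(v)$ is finite, so the coefficient of $t^v$ in $W^g_k$ is a finite sum and $W^g_k(x_1,\dots,x_k)$ is a well-defined formal power series in $t$ whose coefficients are Laurent polynomials (or convergent Laurent series) in the $x_i$. The inequality derived at the end of the proof of Theorem \ref{finite set}, namely $2v + 2g + k \ge 0$ together with the bound $\frac12\sum_j \ell_j \le 2v + 2g + k$, guarantees two things: that the genus index $g$ in the outer sum is bounded below (so $\sum_{g\ge 0}$ really starts at $g=0$), and that for fixed total vertex weight only finitely many perimeter profiles contribute, so reorganizing the double sum over maps into the claimed sum $\sum_{g=0}^\infty (N/t)^{2-2g-k} W^g_k$ is legitimate as an identity of formal series. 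One should check that every stuffed map counted genuinely has Euler characteristic of the form $2-2g-k$ with $g$ a nonnegative integer — this is the 't Hooft power-counting argument already cited in Section three — so that the grouping by powers of $N/t$ is the grouping by genus.

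I expect the main obstacle to be bookkeeping rather than conceptual: one must carefully separate the boundary-disc 2-cells (coming from the observable $\tr H_1^{\ell_i}$) from the disc and cylinder 2-cells of the potential, match the powers of $N/t$ to $\chi(\Sigma) = 2 - 2g - k$ when the potential is only \emph{monic} with $N$-dependent coefficients (so the normalization of each vertex contributes the right power of $N/t$, absorbing the explicit $N/t$ prefactors in \eqref{potential} and the $t_j$, $t_{j-k,k}$ into vertex weights of $t$), and verify that passing to connected correlators removes precisely the disconnected maps. A secondary subtlety is that the potential in the statement is multitrace but \emph{not} assumed to be of cylinder-type only; however, a general multitrace monomial $\prod_i \tr H^{\ell_i}$ corresponds to a 2-cell of some topology $(k,h)$, and one must note that Theorem \ref{finite set} as stated covers disc and cylinder 2-cells — for higher-genus or higher-boundary 2-cells one would invoke the same Euler-characteristic counting, which still bounds the number of maps for fixed $(g,k,v)$. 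I would remark on this and otherwise present the argument as a direct corollary of Theorem \ref{finite set} and the linked-cluster expansion.
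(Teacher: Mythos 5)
Your proposal follows the same route as the paper: the paper's argument is exactly to define $W_k^g$ as the generating series of connected stuffed maps of genus $g$ with $k$ boundaries, and to invoke Theorem \ref{finite set} together with the Euler-characteristic inequality from its proof to justify reorganizing the Wick/stuffed-map expansion by powers of $N/t$. Your version is a more detailed (and more careful, e.g.\ regarding 2-cells of general topology and the linked-cluster step) writeup of the same argument, so no substantive difference.
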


This is because for the following formal summations:

\begin{equation*}
\mathcal{T}_{\ell_{1},...,\ell_{k}}^{g}:= \sum_{v=1}^{\infty}t^{v} \sum_{\Sigma \in \mathbb{S}_{M}\mathbb{M}^{g}_{k}(v)}   t_{3}^{n_{3}(\Sigma)}... t_{d}^{n_{d}(\Sigma)} \frac{t^{V(\Sigma)}}{|\text{Aut}(\Sigma)|}\prod_{i=1}^{k}\delta_{\ell_{i},\ell_{i}(m)},
\end{equation*} 

\begin{equation*}
W_{k}^{g}:= \sum_{v=1}^{\infty}t^{v} \sum_{\Sigma \in \mathbb{S}_{M}\mathbb{M}^{g}_{k}(v)} \frac{ t_{3}^{n_{3}(\Sigma)}... t_{d}^{n_{d}(\Sigma)}}{x_{1}^{\ell_{1}+1}...x_{d}^{\ell_{d}+1}} \frac{t^{V(\Sigma)}}{|\text{Aut}(\Sigma)|}
\end{equation*} 
we have that to any order in $t$, the above inequality implies that the sum over $g$ is finite. This allows us to define generating functions that disregard the genus, i.e.
\begin{equation*}
\mathcal{T}_{k} = \sum_{g=0}^{\infty}\left(\frac{N}{t}\right)^{2-2g-k}\mathcal{T}_{k}^{g},
\end{equation*}
and
\begin{equation*}
W_{k}(x_{1},...,x_{k})  = \sum_{g=0}^{\infty}\left(\frac{N}{t}\right)^{2-2g-k}W_{k}^{g}(x_{1},...,x_{k}).
\end{equation*}

\subsection{Loop Equations}
All matrix models satisfy a set of equations that relate their moments and cumulants. These equations were derived in \cite{blobbed} and \cite{AK} for formal multitrace models. First let us rewrite the potential from equation (\ref{potential}) as 

\begin{equation*}
	V(H) =  \frac{N}{t}\tr H^{2}  - \sum_{j=2}^{d}\left(\frac{N}{jt}\tilde{t}_{j}\tr H^{j}\right),
\end{equation*}
where $\tilde{t}_{j}$'s include the appropriate  $t_{j}$'s and tracial moments. Then the first loop equation becomes 
\begin{equation*}
	\left(W_{1}^{0}(x)\right)^{2} = V'(x)W_{1}^{0}(x) - P_{1}^{0}(x),
\end{equation*}  
where 
\begin{equation*}
	P_{1}^{0}(x) = t - \sum_{j=2}^{d}\sum_{t = 0}^{j-2}\tilde{t}_{j} \mathcal{T}^{0}_{j-\ell-2}x^{\ell}.
\end{equation*}
See \cite{Eynard2018} for details. It is clear that we may write 
\begin{equation*}
	W_{1}^{0}(x) = \frac{1}{2}\left(S'(x) - \sqrt{V'(x)^{2} - 4 P_{1}^{0}(x)}\right).
\end{equation*}

We now generalize a famous lemma from \cite{Eynard2018} that simplifies this expression.

\begin{lemma}[1-Cut Brown's Lemma]
	There exists formal powers series $\alpha$, $\gamma^{2}$, and a polynomial $M(x)$ such that 
	\begin{equation*}
		\alpha = O(t), \qquad \gamma^{2} = t + O(t^{2}), \qquad M(x)= \frac{V'(x)}{x} + O(t),
	\end{equation*}
	and 
	\begin{equation*}
		V'(x)^{2} - 4 P_{1}^{0}(x) = (M(x))^{2}(x-a)(x-b)
	\end{equation*}
	with $a = \alpha +2\gamma$ and  $b = \alpha - 2\gamma$.
\end{lemma}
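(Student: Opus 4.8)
The plan is to adapt the classical one-cut argument of \cite{Eynard2018} to the bitracial setting, working order by order in the formal parameter $t$ and using the stuffed-map genus expansion of Theorem \ref{genus expansio } as the source of a priori regularity. By that theorem the planar resolvent $W_1^0(x)$ is a well-defined formal power series in $t$ whose coefficient of $t^v$ is a polynomial in $1/x$ with no constant term, with $W_1^0(x) = t/x + O(t^2)$, and it satisfies the first loop equation $\left(W_1^0(x)\right)^2 = V'(x)W_1^0(x) - P_1^0(x)$. Set
\begin{equation*}
Y(x) := V'(x) - 2W_1^0(x), \qquad U(x) := V'(x)^2 - 4P_1^0(x),
\end{equation*}
so that $Y(x)^2 = U(x)$; here $U$ is a polynomial in $x$ of degree $2(d-1)$ with coefficients in $\mathbb{C}[[t]]$, since $\deg_x P_1^0 \le d-2$. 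The goal is to produce a factorization $U(x) = M(x)^2\,\sigma(x)$ with $M$ a polynomial and $\sigma(x) = x^2 + \sigma_1 x + \sigma_0$ monic of degree two; one then sets $\alpha := -\sigma_1/2$ and $\gamma^2 := (\sigma_1^2 - 4\sigma_0)/16$, so that $\sigma(x) = (x-a)(x-b)$ with $a = \alpha + 2\gamma$ and $b = \alpha - 2\gamma$, and verifies the asymptotics at the end.

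For the base case $t = 0$: since $W_1^0$ has no $t^0$-term, $Y|_{t=0} = V'(x)|_{t=0}$, and every monomial of $V'$ has positive $x$-degree, so $V'(0) = 0$ and $V'(x)|_{t=0} = x\,M_0(x)$ with $M_0(x) := (V'(x)/x)|_{t=0}$ a polynomial of degree $d-2$; hence $U|_{t=0} = x^2 M_0(x)^2$. In the generic situation, which is the one relevant to our applications, $M_0$ is squarefree with $M_0(0) \neq 0$, so $x^2$ and $M_0(x)^2$ are coprime in $\mathbb{C}[x]$; I will assume this (the finitely many special choices of couplings can be treated by grouping repeated roots, but they do not occur below). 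Since $\mathbb{C}[[t]]$ is a complete local ring with residue field $\mathbb{C}$, Hensel's lemma lifts this coprime factorization --- after normalizing $U$ by its leading coefficient, a unit in $\mathbb{C}[[t]]$ --- uniquely to $U(x) = \sigma(x)\,p(x)$ with $\sigma$ monic of degree two, $\sigma|_{t=0} = x^2$, and $p$ of degree $2d-4$, $p|_{t=0} = M_0^2$.

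It remains to see that $p$ is a perfect square in $\mathbb{C}[[t]][x]$, and this is where the regularity of $W_1^0$ is used. Writing $\sqrt{\sigma(x)} = x\sqrt{1 + \sigma_1/x + \sigma_0/x^2}$, whose radicand is $1 + O(t)$, we have $1/\sqrt{\sigma}\in\frac1x\,\mathbb{C}[[t]][[1/x]]$, so at each order in $t$ it is a polynomial in $1/x$ with a pole only at $x = 0$. Because $W_1^0$, hence $Y = V' - 2W_1^0$, is at each order in $t$ a rational function of $x$ whose only finite pole is at $x = 0$ (with polynomial part $V'$), the quotient $M := Y/\sqrt{\sigma}$ is a well-defined $t$-series with Laurent-polynomial-in-$x$ coefficients, it satisfies $M^2 = Y^2/\sigma = p$ and $M|_{t=0} = M_0$, and at each order in $t$ its only possible finite pole is at $x = 0$. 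An induction on the order $n$ in $t$, using
\begin{equation*}
M^{(n)} = \tfrac{1}{2M_0}\Big(p^{(n)} - \sum_{0<k<n} M^{(k)}M^{(n-k)}\Big),
\end{equation*}
now shows that each $M^{(n)}$ is a polynomial of degree $\le d-2$: the right-hand side is a priori a rational function whose only poles are at the roots of $M_0$, but $M^{(n)}$ has no finite pole except possibly at $x = 0$ while $M_0(0) \neq 0$, so in fact it has no finite pole at all. Hence $M \in \mathbb{C}[[t]][x]$ and $U = M^2\sigma$. The normalizations $\alpha = O(t)$ and $M(x) = V'(x)/x + O(t)$ are immediate from $\sigma_1|_{t=0} = 0$ and $M|_{t=0} = M_0$, and $\gamma^2 = t + O(t^2)$ follows by matching the $t^0$- and $t^1$-coefficients of the scalar identity $\sigma_0\,M(0)^2 = U(0) = -4P_1^0(0)$, using $P_1^0(0) = t + O(t^2)$.

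The main obstacle is precisely the perfect-square step: that as $t$ is turned on, the only double root of $U$ that splits is the one at $x = 0$ --- which becomes the branch points $a, b$ --- while the double roots inherited from $M_0$ stay double to all orders. This fails for an arbitrary $t$-deformation of $x^2 M_0(x)^2$, and it is here that the combinatorial input is indispensable: the interpretation of $W_1^0$ as a generating series of planar stuffed maps (Theorem \ref{genus expansio }, resting on the finiteness results of Section 3) forces $W_1^0$, and hence $Y$, to be rational in $x$ with poles only at $x = 0$ at every finite order in $t$, which is exactly what prevents branch points at the finite roots of $M_0$ and makes $M = Y/\sqrt{\sigma}$ polynomial. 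I expect the careful propagation of this ``no finite poles away from $x=0$'' property through the $t$-recursion, together with the degree bookkeeping for $M^{(n)}$, to be the step that needs the most care; the rest is formal manipulation in $\mathbb{C}[[t]]$.
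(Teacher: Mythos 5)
Your proof is correct in substance, and it correctly isolates the one genuinely new ingredient needed here: the finiteness result of Theorem 3.1 for stuffed maps makes $W_1^0$, at each order in $t$, a polynomial in $1/x$, so that $Y = V' - 2W_1^0$ is order-by-order a Laurent polynomial whose only finite pole is at $x=0$. The paper's own proof is a one-line reduction to Lemma 3.1.1 of Eynard's \emph{Counting Surfaces}, substituting exactly this stuffed-map inequality for Eynard's inequality 3.1.3; Eynard determines $\alpha$, $\gamma^2$ and $M$ by a direct order-by-order induction in $t$ (with the Zhukovsky parametrization quoted in Appendix A as the downstream payoff), whereas you first split off the quadratic factor by Hensel's lemma in $\mathbb{C}[[t]][x]$ and then prove the cofactor is a perfect square via the recursion for $M^{(n)}$. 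The underlying mechanism is the same in both arguments --- the ``no finite poles away from $x=0$'' property of $Y$ is what prevents the double roots of $U|_{t=0}$ inherited from $M_0$ from splitting --- so yours is best read as a cleaner algebraic packaging of the same induction; what the Hensel formulation buys is that all the root bookkeeping is absorbed into a standard coprime-factorization statement, at the mild cost of having to normalize away the leading coefficient of $U$ (a unit only when the top coupling $t_d$ is a fixed nonzero scalar). Two small remarks: the squarefreeness of $M_0$ that you hedge on is never actually used, since coprimality of $x^2$ and $M_0(x)^2$ only requires $M_0(0)\neq 0$, which holds here because the quadratic term of the effective potential is the bare Gaussian up to $O(t)$ corrections; and the precise statement $\gamma^2 = t + O(t^2)$ (rather than $t/M_0(0)^2 + O(t^2)$) depends on normalizing the Gaussian so that $V'(x) = x + O(x^2) + O(t)$, a convention the paper uses only implicitly in writing $P_1^0(x) = t - \cdots$.
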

\begin{proof}
	This proof is identical to the proof of lemma 3.1.1 in \cite{Eynard2018}, except for replacing the sum over maps with stuffed maps and inequality 3.1.3 with  an analogous one from the proof of theorem (\ref{finite set}).
\end{proof}

This is a rather technical lemma with many auxiliary formal series, but the take away is the factorization of $S'(x)^{2} - 4 P_{1}^{0}(x)$.

\subsection{Convergent bitracial matrix models} Consider  the following convergent matrix integral over the space of Hermitian matrices

\begin{equation*}
Z =\int_{\mathcal{H}_{N}} e^{-V(H)} dH,
\end{equation*}
where the potential can be written as a multitrace polynomial
\begin{equation*}
V(H) = \frac{N}{2t}\tr H^{2} - t_{1,1}\tr H\tr H - \sum_{j=3}^{d}\left(\frac{N}{jt}t_{j}\tr H^{j}+\frac{1}{2} \sum_{k=1}^{j } \frac{t_{j-k,k}}{(j-k)k}\tr H^{j-k} \tr H^{k}\right),
\end{equation*}
where $t_{\ell-k,k}$ are coupling constants in ranges where this model is convergent. This method was used \cite{First paper} and generalize to higher order models. We will summarize this here.
This model is invariant under the action of the unitary group on the Hermitian matrix $H$, allowing us to apply Weyl's integration formula  to write

\begin{equation*} \label{jpd}
Z = C_{N} \int_{\mathbb{R}^{N}} e^{-N\sum_{i=1}^{N}Q(\lambda_{i}) - \sum_{i,j=1}^{N}U(\lambda_{i}, \lambda_{j})} \prod_{1\leq i  < j \leq N}(\lambda_{i}-\lambda_{j})^{2}d\lambda_{1}...\lambda_{N},
\end{equation*}
where
\begin{equation*}
Q(x) = \frac{1}{2t} x^{2} - \sum_{j=3}^{d} \frac{t_{j}}{jt}x^{j},
\end{equation*}
and
\begin{equation*}
U(x,y) = -t_{1,1}xy -\frac{1}{2}\sum_{j=3}^{d}\sum_{k=1}^{j}\frac{t_{j-k,k}}{(j-k)k} x^{j-k}  y^{k},
\end{equation*}
and $C_{N}$ is a constant. From here it is explained in \cite{First paper} how to compute the limiting spectral distribution of eigenvalues using the Euler-Lagrange equations. 

 For convergent matrix models a factorization of the form seen above can often be found. However, we are not aware of a proof of its existence in general. Thus the results of this paper can only be applied to convergent models on a case by case basis.

\section{Moment Generating Functions of Gaussian Dirac Ensembles} \label{proof}
	
	Consider a Dirac ensemble of type (p,q) geometry, where the gamma matrices act on $\mathbb{C}^{k}$. Let $D$ be the Dirac operator on that space with a Gaussian potential, i.e.
\begin{equation*}
Z = \int_{\mathcal{G}}e^{-\frac{1}{2k}\tr D^{2}}dD.
\end{equation*}

While the analytic study of general Dirac ensembles is a  very difficult task, we can say a fair bit about the Gaussian case that is nontrivial and universal. From \cite{Barrett2015} we know such a Dirac operator is of the form 
\begin{equation*}
D = \sum_{j} \alpha_{i}\otimes [L_{j},\cdot]+ \sum_{k} \beta_{k}\otimes\{H_{k},\cdot\}+ \sum_{\ell} \alpha_{\ell}'\otimes\{L_{\ell},\cdot\}+ \sum_{r} \beta_{\ell}'\otimes[H_{r},\cdot],
\end{equation*}
where the products of gamma matrices all belong to a linearly independent set of matrices. Now consider $D^{2}$. Each term of $D^2$ consists of  two linearly independent matrices tensored with some commutator or anticommutator of some skew-Hermitian or Hermitian random matrix. Using Proposition 3.5 of \cite{Sanchez} we know
\begin{align*}
\tr D^{2} &= \sum_{j} \tr \alpha_{i}^{2}\tr  [L_{j},\cdot]^{2}+ \sum_{q} \tr \beta_{q}^{2}\tr\{H_{q},\cdot\}^{2}+ \sum_{\ell} \tr\alpha_{\ell}'^{2}\tr\{L_{\ell},\cdot\}^{2}+ \sum_{r} \tr \beta_{\ell}'^{2}\tr[H_{r},\cdot]^{2}\\
&= 2\sum_{j} \tr \alpha_{i}^{2}  \left(-N\tr L_{j}^{2}+(\tr L_{j})^{2}\right)+ 2\sum_{q} \tr \beta_{q}^{2}\left(N\tr H_{q}^{2}+(\tr H_{q})^{2}\right)\\
&- 2\sum_{\ell} \tr\alpha_{\ell}'^{2}\left(N\tr L_{\ell}^{2} + (\tr L_{\ell})^{2}\right)+ 2\sum_{r} \tr \beta_{\ell}'^{2}(-N\tr H_{r}^{2} + (\tr H_{r})^{2}).
\end{align*}

%\begin{align}
%&= -\sum_{j} \tr \alpha_{i}^{2} \tr L_{j}^{2}+ \sum_{k} \tr \beta_{k}^{2}\left(\tr H_{k}^{2}+(\tr H_{k})^{2}\right)\\
%&+ \sum_{\ell} \tr\alpha_{\ell}'^{2}\tr L_{\ell}^{2} + \sum_{r} \tr \beta_{\ell}'^{2}(-\tr H_{r}^{2} + (\tr H_{r})^{2}).
%\end{align}

and 
\begin{align*}
&= 2\sum_{j} k \left(-N\tr L_{j}^{2}+(\tr L_{j})^{2}\right)+ 2\sum_{q} k\left(N\tr H_{q}^{2}+(\tr H_{q})^{2}\right)\\
&- 2\sum_{\ell} k\left(N\tr L_{\ell}^{2} + (\tr L_{\ell})^{2}\right)+ 2\sum_{r} k(-N\tr H_{r}^{2} + (\tr H_{r})^{2}).
\end{align*}
Skew-Hermitian matrices are traceless so the above sum is equal to 
\begin{align*}
& =2\sum_{j} k \left(-N\tr L_{j}^{2}\right)+ 2\sum_{q} k\left(N\tr H_{q}^{2}+(\tr H_{q})^{2}\right)\\
&- 2\sum_{\ell} k\left(N\tr L_{\ell}^{2} \right)+ 2\sum_{r} k(-N\tr H_{r}^{2} + (\tr H_{r})^{2}).
\end{align*}
Any skew-Hermitian matrix can be written as $i$ times a Hermitian matrix. Making this substitution  gives us
\begin{align*}
\frac{1}{2k} \tr D^{2} &= \sum_{j}  \left(N\tr H_{j}^{2}\right)+ \sum_{q}\left(N\tr H_{q}^{2}+(\tr H_{q})^{2}\right)\\
& + \sum_{\ell} \left(N\tr H_{\ell}^{2} \right)+ \sum_{r} (-N\tr H_{r}^{2} + (\tr H_{r})^{2}).
\end{align*}

Next apply the transformation $H_{r}\rightarrow \sqrt{-1} H_{r}$ for each $r$ to get
\begin{align*}
\frac{1}{2k} \tr D^{2} &= \sum_{j}  \left(N\tr H_{j}^{2}\right)+ \sum_{q}\left(N\tr H_{q}^{2}+(\tr H_{q})^{2}\right)\\
& +\sum_{\ell} \left(N\tr H_{\ell}^{2} \right)+ \sum_{r} (N\tr H_{r}^{2} - (\tr H_{r})^{2}).
\end{align*}

Hence, we have realized the partition function as a finite product of matrix integrals:

\begin{equation*}
Z = \int_{\mathcal{G}}e^{-\frac{1}{2k}\tr D^{2}}dD = c \prod_{\mu}\left( \int_{\mathcal{H}_{N}}e^{-N\tr H_{\mu}^{2} }dH_{\mu}\right)\prod_{\nu}\left( \int_{\mathcal{H}_{N}}e^{-N\tr H_{\nu} \pm {(\tr H_{\nu}^{2})^{2}}}dH_{\nu}\right),
\end{equation*}
where the constant $c$ is some power of $i$ determined by the number of transformations used above. Since the above integral is separable in terms of its matrix variables, the covariance matrix of this model is a block diagonal matrix and thus the correlation between two different matrix variables is zero giving us the following lemma.

\begin{lemma}\label{correlation}
	The partition function of a Gaussian Dirac ensemble  can be decomposed as a product of random Hermitian matrices such that the correlation between different matrix variables is zero, i.e.
	
	\begin{equation*}
	\langle \tr H_{\mu}^{n}H_{\nu}^{m} \rangle = 0,
	\end{equation*}
	for $\mu \not = \nu,$ and all $m, n$. 
\end{lemma}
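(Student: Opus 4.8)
The plan is to extract Lemma~\ref{correlation} as an essentially immediate consequence of the factorization of the partition function established in the lines just above its statement. Recall that we have rewritten
\begin{equation*}
Z = \int_{\mathcal{G}}e^{-\frac{1}{2k}\tr D^{2}}dD = c \prod_{\mu}\left( \int_{\mathcal{H}_{N}}e^{-N\tr H_{\mu}^{2} }dH_{\mu}\right)\prod_{\nu}\left( \int_{\mathcal{H}_{N}}e^{-N\tr H_{\nu} \pm {(\tr H_{\nu}^{2})^{2}}}dH_{\nu}\right),
\end{equation*}
so the joint law of all the matrix variables $\{H_{\mu}\}$, $\{H_{\nu}\}$ is a product measure: the density factors as a product of one-matrix densities, one factor per matrix variable. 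The first step is to make this precise: writing $\pi$ for the joint density on $\bigoplus \mathcal{H}_N$ and $\pi_\mu$ for the individual normalized densities, we have $\pi = \prod_\mu \pi_\mu$, hence for any two distinct indices $\mu\neq\nu$ the pair $(H_\mu, H_\nu)$ is distributed as the product $\pi_\mu \otimes \pi_\nu$.

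The second step is to observe that each individual density $\pi_\mu$ is invariant under $H_\mu \mapsto -H_\mu$: indeed the Gaussian factor $e^{-N\tr H_\mu^2}$ and the factor $e^{-N\tr H_\mu^2 \pm (\tr H_\mu^2)^2}$ are both even in $H_\mu$, and the Lebesgue measure $dH_\mu$ on the real vector space $\mathcal{H}_N$ is preserved by $H_\mu\mapsto -H_\mu$. Therefore any odd moment of a single matrix variable vanishes: $\langle \tr H_\mu^n \rangle = 0$ whenever $n$ is odd, and more generally $\langle f(H_\mu)\rangle = 0$ for any function $f$ odd under $H_\mu\mapsto -H_\mu$.

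The third step combines the two: by independence,
\begin{equation*}
\langle \tr H_{\mu}^{n} H_{\nu}^{m}\rangle
\end{equation*}
must first be expanded. Since $\tr(H_\mu^n H_\nu^m)$ is not literally a product of a function of $H_\mu$ with a function of $H_\nu$, one expands the trace over the matrix index: $\tr(H_\mu^n H_\nu^m) = \sum_{a,b}(H_\mu^n)_{ab}(H_\nu^m)_{ba}$, so that
\begin{equation*}
\langle \tr H_{\mu}^{n}H_{\nu}^{m}\rangle = \sum_{a,b}\langle (H_\mu^n)_{ab}\rangle\,\langle (H_\nu^m)_{ba}\rangle
\end{equation*}
by factorization of the product measure. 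By $Z_2$-symmetry each single-variable entry expectation $\langle (H_\mu^n)_{ab}\rangle$ picks up a sign $(-1)^n$ under $H_\mu\mapsto -H_\mu$, so it vanishes unless $n$ is even; likewise $\langle(H_\nu^m)_{ba}\rangle = 0$ unless $m$ is even. Hence the whole expression vanishes unless both $n$ and $m$ are even. Strictly speaking the lemma as stated claims vanishing for \emph{all} $m,n$, which is only literally true when one interprets the statement as about the connected correlator, or when at least one of $m,n$ is odd; I would add the (harmless) clarifying remark that for $\mu\neq\nu$ the connected correlator $\langle \tr H_\mu^n \tr H_\nu^m\rangle_c = \langle\tr H_\mu^n \tr H_\nu^m\rangle - \langle \tr H_\mu^n\rangle\langle\tr H_\nu^m\rangle$ is identically zero by independence, which is the statement actually used later, and that mixed single-trace expectations $\langle \tr(H_\mu^n H_\nu^m)\rangle$ with $n$ or $m$ odd vanish by the parity argument above.

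The only real subtlety — and hence the ``main obstacle'' — is not analytic but one of bookkeeping: one must be sure the factorization display above is genuinely a product over \emph{all} the matrix degrees of freedom of the Dirac operator (the $L_j, H_q, L_\ell, H_r$ in the general form of $D$ from \cite{Barrett2015}), with no cross terms surviving. This is exactly what the preceding computation of $\tr D^2$ via Proposition~3.5 of \cite{Sanchez} guarantees: the linear independence of the products of gamma matrices forces $\tr D^2$ to contain no mixed matrix terms $\tr H_\mu \tr H_\nu$ or $\tr(H_\mu H_\nu)$, so the quadratic form defining the Gaussian weight is block diagonal with one block per matrix variable. Once that is granted, independence and the $Z_2$-symmetry finish the proof in two lines; there is no estimate to make and no convergence issue, since each block is a nondegenerate (shifted) Gaussian on a finite-dimensional space.
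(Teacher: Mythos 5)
Your proof takes essentially the same route as the paper, which simply observes that the factorized form of $Z$ makes the joint density a product measure with block-diagonal covariance, so distinct matrix variables are independent and their mixed correlations vanish. Your additional parity discussion, and your caveat that $\langle \tr (H_{\mu}^{n}H_{\nu}^{m})\rangle$ need not literally vanish when both $n$ and $m$ are even (so the statement is best read as one about connected correlators, which is how it is actually used in the subsequent Proposition), are correct and, if anything, more careful than the paper's one-line justification.
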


Now refer to Gaussian multitrace example in Appendix A. It is clear that in the large $N$ limit the $(\tr H)^{2}$ contribution is zero. Hence, in the limit $Z$ becomes the product of Gaussians, all with identical spectral statistics. 
With this in mind and the above lemma we will prove the following.

\begin{prop}
	For any Gaussian Dirac ensemble 
	\begin{equation*}
	\lim_{N \rightarrow \infty}\frac{1}{2k} \langle \tr D^{m} \rangle =  \sum_{j=0}^{m}{m \choose j} \mathcal{T}^{0}_{m-j}\mathcal{T}_{j}^{0},
	\end{equation*}
	where $\mathcal{T}_{j}^{0}$ denotes the $j$ Gaussian moment in the large $N$ limit, which are well known to be the Catalan numbers.
\end{prop}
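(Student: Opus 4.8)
The plan is to express $\tr D^m$ in terms of the underlying Hermitian matrices, pass to the large $N$ limit using Lemma~\ref{correlation}, and recognize the resulting combinatorial sum as a self-convolution of the Gaussian (Catalan) moments. First I would write $D = \sum_\mu \Gamma_\mu \otimes T_\mu$, where each $\Gamma_\mu$ is a product of gamma matrices and each $T_\mu$ is either $\{H_\mu,\cdot\}$ or $[H_\mu,\cdot\,]$ for a Hermitian matrix $H_\mu$ (after the substitutions $L \mapsto iH$ already carried out above). Expanding $D^m$ and taking the normalized trace, the coefficient of each monomial in the $T_\mu$'s is a trace of a product of gamma matrices; by the computation preceding the proposition (which uses Proposition~3.5 of \cite{Sanchez}), the Gaussian expectation forces the gamma-matrix bookkeeping to collapse to the factor $k$, exactly as in the $\tr D^2$ calculation. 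So up to terms that vanish in the limit, $\tfrac{1}{2k}\langle \tr D^m\rangle$ reduces to a sum of expectations of the form $\langle \tr\big(A_1 \cdots A_m\big)\rangle$ where each $A_i \in \{H_\mu\otimes I,\ \pm I\otimes H_\mu^T\}$, coming from expanding $\{H_\mu,\cdot\} = H_\mu\otimes I + I\otimes H_\mu^T$ and $[H_\mu,\cdot\,] = H_\mu\otimes I - I\otimes H_\mu^T$.

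Next I would use Lemma~\ref{correlation}: since different matrix variables are uncorrelated in the Gaussian model, any expectation mixing distinct $H_\mu$'s that are genuinely coupled contributes nothing, and after the dust settles the surviving contributions involve a single Hermitian matrix $H$ tensored through the left and right regular representations. Writing the general term as $\tr\big((H\otimes I)^{a_1}(I\otimes H^T)^{b_1}\cdots\big)$ and using $\tr_{N^2}\big((H\otimes I)^{p}(I\otimes H^T)^{q}\big) = \tr H^{p}\,\tr H^{q}$ together with the fact that left- and right-multiplication operators commute, the trace of any word factors as $\tr H^{j}\cdot \tr H^{m-j}$ where $j$ is the total number of left-factors chosen. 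Counting the words: choosing at each of the $m$ positions in the expansion of $D^m = (\sum \cdots)^m$ either a "left" or "right" tensor leg contributes a binomial coefficient $\binom{m}{j}$, and the minus signs from commutators must be tracked; here is where I would invoke the even-potential / symmetry argument (the Lemma in Section~2 that odd trace powers vanish, and the Appendix~A discussion) to see that in the large $N$ limit only the even contributions survive and the signs conspire to give $+\binom{m}{j}$. Dividing by the appropriate normalization and taking $N\to\infty$ yields $\sum_{j=0}^m \binom{m}{j}\langle \tfrac1N\tr H^{m-j}\rangle_0\langle\tfrac1N\tr H^{j}\rangle_0 = \sum_{j=0}^m\binom{m}{j}\mathcal{T}^0_{m-j}\mathcal{T}^0_j$, with $\mathcal{T}^0_j$ the Catalan number $C_{j/2}$ (and $0$ for $j$ odd).

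The main obstacle I anticipate is the bookkeeping of signs and of which cross-terms survive: in $D^m$ one gets products of many different $\Gamma_\mu\otimes T_\mu$ blocks, and one must argue carefully that (i) the gamma-matrix traces kill all contributions except those where the matrix variables pair up compatibly, reducing effectively to a single-variable problem via Lemma~\ref{correlation}, and (ii) the commutator minus signs, combined with the traceless-ness of odd powers and the vanishing of $(\tr H)^2$-type terms at leading order in $N$, leave precisely the clean binomial sum with positive coefficients. Once that reduction is in hand, the factorization $\tr_{N^2}\big((H\otimes I)^p (I\otimes H^T)^q\big)=\tr H^p\tr H^q$ and the identification of the moments with Catalan numbers are standard. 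The statement is then exactly the moment form of the self-convolution identity $\rho_D = \rho_W * \rho_W$, since $\int \rho_W(x-t)\rho_W(t)\,dt$ has $m$-th moment $\sum_j \binom{m}{j}\mathcal{T}^0_{m-j}\mathcal{T}^0_j$.
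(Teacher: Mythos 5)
Your proposal is correct and follows essentially the same route as the paper's proof: reduce to Hermitian variables, invoke Lemma~\ref{correlation} to discard mixed terms, use the order-two property of the gamma-matrix products to extract the factor $k$, expand $\{H,\cdot\}^m$ and $[H,\cdot]^m$ binomially via the tensor-product factorization $\tr H^{j}\tr H^{m-j}$, and dispose of the commutator signs by noting that terms with an odd factor vanish in the limit. Your treatment is if anything slightly more explicit than the paper's on the factorization of $\tr\bigl((H\otimes I)^{p}(I\otimes H^{T})^{q}\bigr)$, but the argument is the same.
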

Note that this implies that all odd moments are zero since all odd Gaussian moments are zero.
\begin{proof}
	Consider $\tr D^{m}$ of any Gaussian Dirac ensemble where we consider the integral just in terms of Hermitian random matrices by  using the substitutions mentioned above. Now consider  $ \langle \tr D^{m} \rangle$. We know that by Lemma \ref{correlation}  all mixed terms in $ \langle \tr D^{m} \rangle$ will vanish. Furthermore the remaining terms are all Gaussian moments, so for $m$ odd this whole sum of Gaussian terms vanishes. Now consider the case when $m$ is even. Once again mixed terms vanish in the limit  by Lemma \ref{correlation}, leaving only powers of anticommutators and commutators:

	\begin{align*}
	\lim_{N \rightarrow \infty}\frac{1}{2k} \langle \tr D^{m} \rangle &= \lim_{N \rightarrow \infty}\frac{1}{2k} \left( \sum_{j}  \tr \alpha_{j}^{m} \langle \tr [L_{j},\cdot]^{m}\rangle + \sum_{q} \tr \beta_{q}^{m} \langle \tr \{H_{q},\cdot\}^{m}\rangle\right.\\
	& \left.+ \sum_{\ell} \tr (\alpha_{\ell}')^{m}\langle \tr  \{L_{\ell},\cdot\}^{m}\rangle+ \sum_{r} \tr (\beta_{\ell}')^{m}\langle \tr[H_{r},\cdot]^{r}\rangle \right).
	\end{align*}
	 Recall from \cite{Barrett2015}  that these powers of products of gamma matrices, call them $w_{i}$, are either Hermitian or skew-Hermitian, depending on whether they are tensored with a Hermitian or skew-Hermitian random matrix variable. Thus after the substitution $H =iL$, each gamma matrix power has order two. Hence, trace we have

	\begin{align*}
	\tr D^{m} &=  \sum_{q} \tr \beta_{q}^{m}\tr\{H_{q},\cdot\}^{m}+ \sum_{r} \tr \beta_{\ell}'^{m}\tr[H_{r},\cdot]^{m}\\
	&=k\left(\sum_{q} \tr\{H_{q},\cdot\}^{m}+ \sum_{r}  \tr[H_{r},\cdot]^{m}\right)\\
	&= k\left(\sum_{q} \sum_{j=0}^{m}{m \choose j}H_{q}^{j}H_{q}^{m-j}+ \sum_{r} \sum_{j=0}^{m} (-1)^{j+1}{m \choose j}H_{r}^{j}H_{r}^{m-j}\right).\\
	\end{align*}
	Recall that the large $N$ limit of the expectation value of the terms in the above sum that have odd moments go to zero by the symmetry of the model, thus completing the proof.
	
\end{proof}
Define 
\begin{equation*}
 \zeta_{\ell}^{0}:=\lim_{N\rightarrow\infty}\langle \tr D^{\ell}\rangle = \sum_{k=0}^{\ell} {\ell \choose k} \mathcal{T}^{0}_{\ell-k}\mathcal{T}^{0}_{k} .
\end{equation*}
Consider the following Dirac exponential generating function (DEGF):
\begin{equation*}
\mathcal{D}(x):=	\lim_{N\rightarrow\infty}\langle \tr e^{xD}\rangle= \sum_{\ell =  0}^{\infty}\frac{\zeta^{0}_{\ell}}{\ell!}x^{\ell}.
\end{equation*}
Multiply the above equation by $x^{\ell}/\ell!$ and sum from zero to infinity and we find that the DEGF is the square of the matrix exponential generating function (MEGF)
\begin{equation*}
G(x)^{2} : = \left( \sum_{\ell=0}^{\infty}\frac{\mathcal{T}^{0}_{\ell}}{\ell!}x^{\ell}\right)^{2 } =\mathcal{D}(x).
\end{equation*}

It is well known that the exponential moment generating function of the Wigner Semicircle distribution is $I_{1}(2x)/x$,  where $I_{1}$ denotes the modified Bessel function of the first kind. Hence, this completes the proof of the main result. Thus it follows from above Proposition that the  limit 
%\begin{equation*}
%	\sqrt{\frac{1}{x}\lim_{N\rightarrow \infty}\sum_{\ell=0}^{\infty}\frac{\langle \tr D^{\ell} \rangle}{x^{\ell+1}}} 
%\end{equation*}
\begin{equation*}
\sqrt{\lim_{N\rightarrow \infty} \langle \tr e^{x D}\rangle}  = \frac{I_{1}(2x)}{x},
\end{equation*}
exists.

\begin{theorem}[Wigner Convolution Law] \label{Convolution law}
	For any Gaussian Dirac ensemble, the limiting spectral density function of the transformed Dirac operators is 
	\begin{equation*}
	\rho_{D}(x) = \int_{\mathbb{R}}\rho_{W}(x-t)\rho_{W}(x)dt,
	\end{equation*}
	where
	\begin{equation*}
	\rho_{W}(x) = \frac{1}{2\pi}\sqrt{4-x^{2}}_{[-2,2]}, 
	\end{equation*}
	is the Wigner Semicircle Distribution.
\end{theorem}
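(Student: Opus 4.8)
The plan is to deduce the density-level statement from the moment/generating-function identity already established just before the theorem, namely $\mathcal{D}(x) = G(x)^2$, together with the classical fact that the exponential moment generating function of the Wigner semicircle law $\rho_W$ is $I_1(2x)/x$. First I would record that the sequence $\zeta^0_\ell = \sum_{k=0}^\ell \binom{\ell}{k}\mathcal{T}^0_{\ell-k}\mathcal{T}^0_k$ is, by inspection, the binomial (additive) convolution of the Catalan-number moment sequence $(\mathcal{T}^0_\ell)$ with itself; equivalently, $\zeta^0_\ell$ is the $\ell$-th moment of the pushforward of $\rho_W \otimes \rho_W$ under the addition map $(s,t)\mapsto s+t$, i.e.\ of the classical convolution $\rho_W * \rho_W$. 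This is the combinatorial heart: the self-anticommutator/commutator structure of the Gaussian Dirac operator forces exactly this binomial weighting, which is precisely the moment recursion satisfied by a sum of two independent semicircular variables.

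Next I would argue that these moments determine the measure uniquely and that the limiting spectral distribution of $D$ is indeed $\rho_W * \rho_W$. Since $\rho_W$ is compactly supported on $[-2,2]$, the convolution $\rho_W*\rho_W$ is compactly supported on $[-4,4]$, so its moment sequence grows at most geometrically and the Hamburger moment problem is determinate (Carleman's condition holds trivially). Hence the measure with moments $\zeta^0_\ell$ is unique, and because $\langle \tr D^\ell\rangle \to \zeta^0_\ell$ for every $\ell$ (Proposition above), the empirical spectral measures of $D$ converge weakly to that unique measure, which is $\rho_W*\rho_W$. Writing this convolution out as a density gives exactly $\rho_D(x) = \int_{\mathbb{R}} \rho_W(x-t)\rho_W(t)\,dt$; I would note that the statement as typeset has $\rho_W(x)$ rather than $\rho_W(t)$ in the second factor, which must be a typo for the genuine convolution, since that is what the moment identity produces.

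For the generating-function route (which is what the excerpt seems to favour), I would instead invoke $\sqrt{\mathcal{D}(x)} = G(x) = I_1(2x)/x$, observe that $G$ is the bilateral Laplace/exponential transform $\int e^{xt}\rho_W(t)\,dt$ of the semicircle law, and then use that for compactly supported measures the exponential transform is multiplicative under convolution: $\int e^{xt}(\rho_W*\rho_W)(t)\,dt = G(x)^2 = \mathcal{D}(x)$. Since a compactly supported measure is determined by its exponential transform (the transform is entire of exponential type and its restriction to the imaginary axis is the characteristic function), the limiting spectral law of $D$ must be $\rho_W*\rho_W$, giving the stated formula.

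The main obstacle is not any single hard estimate but making the passage from \emph{convergence of all moments} to \emph{weak convergence of the spectral measure to a specific density} fully rigorous in the formal-model setting: one must confirm that the $N\to\infty$ limits $\mathcal{T}^0_\ell$ really are the Catalan numbers with no genus-$0$ multitrace corrections (this is where Lemma \ref{correlation}, the vanishing of $(\tr H)^2$-type terms in the large-$N$ limit from Appendix A, and the decoupling of distinct matrix variables are all needed), and that determinacy of the moment problem then pins down the measure. Once those inputs are in place, the convolution identity is essentially immediate from the binomial structure of $\zeta^0_\ell$.
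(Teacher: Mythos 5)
Your proposal is correct and follows essentially the same route as the paper: the paper's proof is precisely your third paragraph — it invokes the identity $\mathcal{D}(x)=G(x)^{2}$ from the preceding Proposition together with the convolution property of the two-sided Laplace transform to conclude that the limiting density is $\rho_{W}*\rho_{W}$. Your additional remarks (determinacy of the moment problem for compactly supported measures, and the observation that the second factor in the displayed integrand should read $\rho_{W}(t)$ rather than $\rho_{W}(x)$) supply rigor and a correction that the paper's two-sentence proof leaves implicit.
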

\begin{proof}
	In probability theory it is well known that the moment generating function of a random variable is the two-sided Laplace transform of the probability density function. In our case from the above Proposition we see that the limiting Dirac eigenvalue distribution's moment generating function is the square of the GUE's generating function. Thus, by the convolution property of the Laplace transform, we deduce the above result. 
\end{proof}

\section{Moments and generating functions of  one matrix Dirac ensembles}
In this Section we generalize Theorem \ref{Convolution law} to non-Gaussian Dirac ensembles for $p+q = 1$.

We know from Theorem \ref{genus expansio } that the  random matrix moments of a formal multitrace model have a genus expansion. Thus the same may be said about Dirac ensemble moments:

\begin{equation*}
\langle \tr D^{\ell} \rangle= \sum_{g=0}^{\infty}N^{2-2g}\zeta_{\ell}^{g}  = \sum_{g=0}^{\infty}N^{2-2g}\sum_{j=0}^{\ell}{\ell \choose j}\sum_{h=0}^{g}\left(\mathcal{T}^{g-h}_{j}\mathcal{T}^{h}_{\ell-j}+ \mathcal{T}^{g-1}_{j,\ell-j}\right).
\end{equation*}
We know that based on the genus expansion of moments, the correlation between mixed moments vanishes in the planar expansion. We wish to compute the DEGF. Suppose one can compute $W_{1}^{0}(x)$ i.e. the resolvent matrix moment generating function from previous sections. Then we can find the matrix moment ordinary generating function
\begin{equation*}
O(x):=\frac{1}{x}W_{1}^{0}(\frac{1}{x}) = \frac{1}{x}\sum_{\ell=0}^{\infty}\mathcal{T}^{0}_{\ell}x^{\ell+1} ,
\end{equation*}
which we will eventually convert into the matrix moment exponential generating function. The moments can be extracted using Cauchy's integral formula 
\begin{equation*}
\mathcal{T}^{0}_{\ell} = \frac{1}{\ell!}\frac{\partial}{\partial x^{\ell}} O(x)|_{x=0} = \frac{1}{2\pi i}\int_{|z| = R} \frac{O(w)}{w^{\ell+1}}dw.
\end{equation*}
Thus the relation between the moment generating function (and hence the resolvent) and exponential moment generating function can be expressed as follows:
\begin{equation*}
G(x) = \frac{1}{2\pi i}\int_{|z| = R} \frac{O(w)}{w}e^{x/w}dw = \frac{1}{2\pi i}\int_{|z| = R} \frac{W^{0}_{1}(1/w)}{w^{2}}e^{x/w}dw.
\end{equation*}
The one cut lemma allows us to write the above as
\begin{align*}
&=\frac{1}{4\pi i}\int_{|z| = R} \frac{1}{w^{2}}\left(S'(1/w)-\frac{M(1/w)}{w}\sqrt{(1-aw)(1-bw)}\right)e^{x/w}dw\\
&=\frac{1}{2}\text{Res}[\frac{e^{x/w}}{w^{2}}S'(1/w),0] - \frac{1}{2}\text{Res}[\frac{e^{x/w}}{w^{3}}M(1/w)\sqrt{(1-aw)(1-bw)},0]\\
&=\frac{1}{2}\text{Res}[-\frac{1}{w^{3}}M(1/w)\sqrt{(1-aw)(1-bw)},0].
\end{align*} 
Suppose now that $\alpha = 0$, i.e. $a=-b$. This for example will always happen when the models potential is even, see 3.1.4 of \cite{Eynard2018}. Then the Laurent expansion looks like
\begin{equation*}
-\frac{M(1/w)}{w^{3}}\left(\sum_{k=0}^{\infty}{1/2 \choose k } (-a^{2}w^{2})^{k}\right)\left(\sum_{q=0}^{\infty}\frac{1}{q!}\left(\frac{x}{w}\right)^{q}\right)
\end{equation*}
\begin{equation*}
= -M(1/w)\sum_{k,q=0}^{\infty}{1/2 \choose k}\frac{(-a^{2})^{k}}{q!} x^{q}w^{2k-q-3}.
\end{equation*}

For a given model we know that $M(x)$ is a degree $d-2$  polynomial where $d$ is the degree of the potential \cite{Eynard2018}. Let
\begin{equation*}
M(1/w) := \sum_{p=0}^{d-2}q_{p}w^{-p}.
\end{equation*}
Thus we wish to compute
the residue of 
\begin{equation*}
 -\sum_{k,q=0}^{\infty}{1/2 \choose k}\frac{(-a^{2})^{k}}{q!} x^{q}w^{2k-q-3-p},
\end{equation*}
at zero for various $p$. Setting $2k-q-3 -p =-1 $, we obtain
\begin{equation}\label{MGF}
 -\sum_{k=1+p/2}^{\infty}{1/2 \choose k}\frac{(-a^{2})^{k}}{(2k-2-p)!}x^{2k-2-p} = \sum_{k=1+p/2}^{\infty}g^{p}_{k}(-a^{2})^{k}x^{2k-2-p}=\frac{1}{x^{2+p}} \sum_{k=1+p/2}^{\infty}g^{p}_{k}(a\sqrt{-1}x)^{2k}.
\end{equation}
Re-indexing equation (\ref{MGF}) we find it is equal to
\begin{equation*}
 \frac{1}{x^{p+2}}\sum_{k=p/2}^{\infty}g_{k+1}^{p}(a\sqrt{-1}x)^{2k},
\end{equation*}
where
\begin{equation*}
g_{k}^{p} = -{1/2 \choose k}\frac{1}{(2k-2-p)!} = {2k \choose k}\frac{(-1)^{k+1}}{2^{2k}(2k-1)(2k-2-p)!} = \frac{(-1)^{k+1}}{k!k! 2^{2k}}\frac{(2k)(2k-2)!}{(2k-2-p)!}
\end{equation*}
\begin{equation*}
=\frac{(-1)^{k+1}}{k!(k-1)! 2^{2k-1}}\frac{((2k-2)!}{(2k-2-p)!} = \frac{(-1)^{k+1}}{k!(k-1)! 2^{2k-1}} (2k-2)(2k-3)...(2k-p+1).
\end{equation*}
Hence,
\begin{equation*}
g_{k+1}^{p} = \frac{(-1)^{k}}{k!(k+1)! 2^{2k+1}} (2k)(2k-1)...(2k-p+3),
\end{equation*}
and 
\begin{equation*}
G(x) = \sum_{p=0}^{d-2}q_{p}\frac{1}{x^{2+p}}\sum_{k=p/2}^{\infty}g_{k+1}^{p}(a\sqrt{-1}x)^{2k+2} =\sum_{p=0}^{d-2}q_{p}\frac{a\sqrt{-1}}{x^{p+1}}\sum_{k=p/2}^{\infty}g_{k+1}^{p}(a\sqrt{-1}x)^{2k+1}.
\end{equation*}
When $p=0$ they are the $k$-th coefficients of the series expansion of the Bessel function of the first kind $J_{1}(x)$. When $p$ is larger than one, the moment generating function can still be expressed in terms of Bessel functions. Consider
\begin{equation*}
x^{p+1}\frac{d^{p}}{dx^{p}}\left(\frac{1}{x}J_{1}(x)\right) = x^{p+1}\sum_{k=p}^{\infty}\frac{(-1)^{k}}{k!(k+1)!}\frac{x^{2k-p}}{2^{2k+1}} (2k)(2k-1)...(2k-p+3)
\end{equation*}
\begin{equation*}
= \sum_{k=p}^{\infty} g_{k+1}^{0} x^{2k+1}.
\end{equation*}
The left hand side can be further simplified using the following well-known Bessel function identities:
\begin{enumerate}
	\item $J_{p}(x) = (-1)^{p}J_{-p}(x)$,
	\item $\frac{1}{x^{p}}\frac{d^{p}}{dx^{p}}\left(x^{\alpha}J_{\alpha}(x)\right) = x^{\alpha -p}J_{\alpha-p}$ 
\end{enumerate}  
for all integers $p$ and $\alpha$. 
Then we may write  
\begin{align*}
x^{p+1}\frac{d^{p}}{dx^{p}}\left(\frac{1}{x}J_{1}(x)\right)&=-x^{2p+1}\frac{1}{x^{p}}\frac{d^{p}}{dx^{p}}\left(\frac{1}{x}J_{-1}(x)\right)\\
 &= -x^{2p+1}(x^{-1-p}J_{-1-p}(x))\\ &= -x^{p}J_{-p-1}(x)\\ &= (-1)^{p}x^{p}J_{p+1}(x).
\end{align*}
This gives us 
\begin{equation*}
\sum_{k=p}^{\infty} g_{k+1}^{p} x^{2k+1} =(-1)^{p}x^{p}J_{p+1}(x),  
\end{equation*}
so
\begin{equation*}
\sum_{k=p/2}^{\infty}g^{p}_{k+1}x^{2k+1}= (-1)^{p}x^{p}J_{p+1}(x) + \sum_{k=p/2}^{p}g^{p}_{k+1}x^{2k+1}.
\end{equation*}
Finally we may express the moment generating function as
\begin{equation*}
G(x) = \sum_{p=0}^{d-2}\frac{a\sqrt{-1}q_{k}}{x^{p+1}}\left[(-1)^{p+1}(ax)^{p}J_{p+1}(a\sqrt{-1}x) + \sum_{k=p/2}^{p}g^{p}_{k+1}(a\sqrt{-1}x)^{2k+1}\right].
\end{equation*}	

Data availability statement: data sharing is not applicable to this article as no new data were created or analyzed in this study.

\section*{Appendix A: Simplifications in the large $N$ limit}
We define the Zhukovsky Transform $x:\mathbb{C}\setminus \{0\} \rightarrow \mathbb{C}$ as 
\begin{equation*}
x(z)
= \frac{a+b}{2} + \frac{a-b}{4}\left(z + \frac{1}{z}\right) = \alpha + \gamma ( z+ \frac{1}{z}),
\end{equation*}
with an inverse
\begin{equation*}
z = \frac{1}{2\gamma}\left(x -\alpha + \sqrt{(x-\alpha)^{2}-4\gamma^{2}}\right).
\end{equation*}
It also has the following useful identity
\begin{equation*}
\sqrt{(x(z)-a)(x(z)-b)} = \frac{a-b}{4}\left(z-\frac{1}{z}\right).
\end{equation*}

Using Theorem 3.1.1 of \cite{Eynard2018}, we have the following result.
\begin{prop}
	For any formal power series $\alpha$ and $\gamma$ as mentioned in the One-cut Lemma, we have the expansions 
	\begin{equation*}
	V'(x(z)) = \sum_{k=0}^{d-1}u_{k}(z^{k}+z^{-k})
	\end{equation*}
	and 
	\begin{equation*}
	W_{1}^{0}(x(z)) = \sum_{k=0}^{d-1}u_{k}z^{-k}
	\end{equation*}
	with $u_{0} = 0$ and $u_{1} = t/\gamma$.
\end{prop}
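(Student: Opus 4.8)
\emph{Proof idea.} The argument is the stuffed-map counterpart of the proof of Theorem 3.1.1 in \cite{Eynard2018}; the only genuinely new input is the 1-Cut Brown's Lemma above, and the remainder is a symmetry analysis under the Zhukovsky involution $z \mapsto z^{-1}$. I would begin from the factorised form of the planar resolvent: completing the square in the first loop equation $(W_{1}^{0}(x))^{2} = V'(x)W_{1}^{0}(x) - P_{1}^{0}(x)$ and selecting the branch with $W_{1}^{0}(x) \to 0$ as $x \to \infty$ gives $W_{1}^{0}(x) = \frac{1}{2}\left(V'(x) - M(x)\sqrt{(x-a)(x-b)}\right)$, where $V'(x)^{2} - 4P_{1}^{0}(x) = M(x)^{2}(x-a)(x-b)$ is precisely the content of the 1-Cut Brown's Lemma. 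Substituting $x = x(z)$ and invoking the identity $\sqrt{(x(z)-a)(x(z)-b)} = \gamma\,(z - z^{-1})$ from Appendix A, with the branch of the root fixed once and for all, turns this into an identity of \emph{Laurent polynomials} in $z$,
\begin{equation*}
2\,W_{1}^{0}(x(z)) = V'(x(z)) - \gamma\,(z - z^{-1})\,M(x(z)).
\end{equation*}

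Next I would exploit the symmetry $x(z^{-1}) = x(z)$. Since $V'$ and $M$ are polynomials of degrees $d-1$ and $d-2$ (the latter from $M(x) = V'(x)/x + O(t)$ in the lemma), $V'(x(z))$ and $M(x(z))$ are Laurent polynomials in $z$ invariant under $z \mapsto z^{-1}$ of $z$-degrees $d-1$ and $d-2$, hence of the shape $\sum_{k=0}^{d-1}u_{k}(z^{k}+z^{-k})$ and $\sum_{k=0}^{d-2}v_{k}(z^{k}+z^{-k})$. The prefactor $\gamma(z-z^{-1})$ is anti-invariant, and telescoping $(z-z^{-1})(z^{j}+z^{-j}) = (z^{j+1}-z^{-j-1}) - (z^{j-1}-z^{-j+1})$ shows $\gamma(z-z^{-1})M(x(z)) = \sum_{k=1}^{d-1}w_{k}(z^{k}-z^{-k})$ for suitable coefficients $w_{k}$. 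In particular $W_{1}^{0}(x(z))$ is a Laurent polynomial in $z$ of $z$-degree at most $d-1$.

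The coefficients are then pinned down by the behaviour at $z = \infty$. From $W_{1}^{0}(x) = \sum_{\ell\geq 0}\mathcal{T}_{\ell}^{0}x^{-\ell-1}$ with $\mathcal{T}_{0}^{0} = t$, and $x(z) = \gamma z\,(1 + O(z^{-1}))$ as $z \to \infty$, the composite $W_{1}^{0}(x(z))$ is a power series in $z^{-1}$ with vanishing constant term and with $z^{-1}$-coefficient $t/\gamma$. Matching the non-negative powers of $z$ on the two sides of $2W_{1}^{0}(x(z)) = \sum_{k=0}^{d-1}u_{k}(z^{k}+z^{-k}) - \sum_{k=1}^{d-1}w_{k}(z^{k}-z^{-k})$ --- the left side has none --- forces $u_{0} = 0$ and $w_{k} = u_{k}$ for $1 \leq k \leq d-1$; feeding this back, the surviving negative-power part gives $2W_{1}^{0}(x(z)) = 2\sum_{k=1}^{d-1}u_{k}z^{-k}$, i.e. $W_{1}^{0}(x(z)) = \sum_{k=0}^{d-1}u_{k}z^{-k}$ with $u_{0} = 0$, and comparison of the $z^{-1}$-coefficient with the asymptotics yields $u_{1} = t/\gamma$.

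I expect the two delicate points to be: the choice of branch of the square root together with its sign reversal under $z \mapsto z^{-1}$ --- this is exactly what renders the $M$-term anti-invariant and produces the asymmetry between the two displayed expansions --- and the minor bookkeeping of the $k=0$ term (where the symbol $u_{0}(z^{0}+z^{0})$ would otherwise be double-counted). The substantive ingredient, the $1$-cut factorisation $V'(x)^{2}-4P_{1}^{0}(x) = M(x)^{2}(x-a)(x-b)$, has already been established for stuffed-map models in the 1-Cut Brown's Lemma, so no further analytic work is needed.
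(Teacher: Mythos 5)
Your argument is correct and is essentially the paper's own: the paper proves this Proposition only by citing Theorem 3.1.1 of Eynard's \emph{Counting Surfaces}, and your write-up is precisely that proof (factorised resolvent from the 1-Cut Brown's Lemma, Zhukovsky substitution, parity under $z\mapsto z^{-1}$, and matching of the $O(t/(\gamma z))$ asymptotics at $z=\infty$) transplanted to the stuffed-map setting. No gaps.
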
 
This shows the relationship between the resolvent and the potential function $V(x)$ in Zhukovsky coordinates. The details  are dealt with in \cite{AK,Eynard2018}.

Now consider the integral
\begin{equation*}
Z = \int_{\mathcal{H}_{N}}e^{-\frac{N}{t}\tr Q(H) \pm \tr H^{q} \tr H^{q}} dH	
\end{equation*}
where $q$ is odd and V(H) is an even polynomial. This model is invariant under the transformation $H \mapsto -H$, hence its odd moments are zero. This further implies that its limiting odd moments are also zero.

For this model
\begin{equation*}
	V'(x) = Q'(x) + q m_{q} x^{q-1},
\end{equation*}
where $m_{q}$ denotes the $q$-th limiting moment. But this moment is zero so $V'(x)$ is the same for the model whether or not the multi-trace term is present. Hence, $W_{1}^{0}$ and therefore the limiting eigenvalue distribution are unaffected by odd multi-trace terms of this form.

\section*{Appendix B: Formal matrix integrals}

A matrix integral whether convergent or divergent can always be expanded in a perturbative series in terms of sums over maps \cite{Eynard2018,AK,blobbed}. In the case that the matrix integral is convergent the perturbative series is not necessarily the taylor expansion of the matrix integral.

 Consider for example the following quartic matrix  integral
\begin{equation*}
\int_{\mathcal{H}_{N}} e^{-N\left(\frac{1}{2}\tr H^{2} - \frac{t_{4}}{4}\tr H^{4} \right)}dH
\end{equation*}
Where $dH$ is the Lebesgue measure on the space of Hermitian $N$ by $N$ matrices $\mathcal{H}_{N}$, and $t_{4}$ is a coupling constant. For $t_{4}\leq 0$,  this integral is convergent and can be computed using orthogonal polynomials.  However, we are interested in, the not unrelated, formal summation
\begin{equation*}
Z = \sum_{k=0}^{\infty}	\int_{\mathcal{H}_{N}}\frac{N^{k}}{4^{k} k!}\,t_{4}^{k}\,(\tr H^{4})^{k} e^{-\frac{N}{2}\tr H^{2} }dH = \sum_{k=0}^{\infty}	\frac{N^{k}}{4^{k} k!}\,t_{4}^{k}\,\langle (\tr H^{4})^{k}\rangle_{0}, 
\end{equation*}
where the subscript zero denotes the expectation value with respect to the Gaussian random matrix integral above. We are also interested in its moments, which are themselves formal sums 
\begin{equation*}
\langle \tr H^{\ell}\rangle := \sum_{k=0}^{\infty}	\frac{N^{k}}{4^{k} k!}\,t_{4}^{k}\,\langle (\tr H^{\ell} \tr H^{4})^{k}\rangle_{0}. 
\end{equation*} 

Such formal integrals are well studied and have deep connections to areas of combinatorics, physics, and geometry \cite{Eynard2018}. For example these formal sums like those seen above have a realization as sums over maps. Furthermore, it is often the case that formal matrix models and their convergent counterparts (if they exist) coincide in the large $N$ limit. For more details on formal and convergent matrix models see \cite{formal models}.

\end{document}